\documentclass[sigconf,nonacm]{acmart}

\usepackage{amssymb}
\usepackage{pifont}
\usepackage{tikz}
\usepackage{multirow}
\usepackage{float}
\usepackage[most]{tcolorbox}
\usepackage{colortbl}
\usepackage{array}
\definecolor{trimblue}{HTML}{2A78D6}\definecolor{trimorange}{HTML}{EB6834}
\definecolor{trimaqua}{HTML}{1BAF7A}\definecolor{trimyellow}{HTML}{EDA100}
\definecolor{trimink}{HTML}{1F1F1E}\definecolor{trimmuted}{HTML}{6B6A64}\definecolor{trimpurple}{HTML}{8A5CC2}
\newcounter{findingno}
\newtcolorbox{finding}[1][]{enhanced,breakable,#1,sharp corners,frame hidden,colback=trimblue!5,
  borderline west={2pt}{0pt}{trimblue},left=6pt,right=4pt,top=2pt,bottom=2pt,
  before skip=4pt,after skip=5pt,fontupper=\small}
\newcommand{\FI}[1]{\stepcounter{findingno}{\color{trimblue}\textsc{Finding~\arabic{findingno}}}\quad\emph{#1}\par\smallskip}

\newcommand{\KN}{\par\noindent{\color{trimblue}$\rightarrow$}~}
\newtcolorbox{algbox}{enhanced,breakable,sharp corners,frame hidden,colback=white,
  borderline north={0.6pt}{0pt}{trimink},borderline south={0.6pt}{0pt}{trimink},
  left=2pt,right=2pt,top=3pt,bottom=3pt,before skip=5pt,after skip=5pt,fontupper=\small}
\newcommand{\ours}{\rowcolor{trimblue!7}}

\usetikzlibrary{arrows.meta,positioning,calc,backgrounds}

\colorlet{invblue}{trimblue}
\colorlet{feeorange}{trimorange}
\colorlet{repgreen}{trimaqua}
\colorlet{depred}{trimink}
\tikzset{
  chanbox/.style={draw=black!65,line width=0.5pt},
  feearrow/.style={->,draw=feeorange!80!black,line width=1.0pt},
  flowarrow/.style={->,draw=repgreen!70!black,line width=0.9pt},
  trimbox/.style={draw=black!70,rounded corners=2pt,line width=0.6pt},
}

\theoremstyle{acmdefinition}
\newtheorem{assumption}{Assumption}

\makeatletter\def\@authorfont{\large}\def\@affiliationfont{\large}\makeatother
\setcopyright{none}
\renewcommand\footnotetextcopyrightpermission[1]{}

\begin{document}

\title{\textsc{Trim}: Fee Level Matters More Than Skew in Lightning Routing}

\author{Huiping Sun$^{1}$\,\ding{41}, Peilu Guo$^{1}$, Yiming Zhu$^{2}$, Yueqi Wu$^{1}$, and Zhong Chen$^{2,3}$}
\affiliation{%
  \institution{$^{1}$School of Software \& Microelectronics, Peking University, Beijing, China\\
  sunhp@ss.pku.edu.cn, gpeilu@stu.pku.edu.cn, wuyueqi77@stu.pku.edu.cn\\
  $^{2}$School of Computer Science, Peking University, Beijing, China\\
  ymzhu25@stu.pku.edu.cn, zhongchen@pku.edu.cn\\
  $^{3}$School of AI and Liberal Art, Beijing Normal-Hong Kong Baptist University, Zhuhai, China}
  \country{}
}
\renewcommand{\shortauthors}{Sun et al.}

\begin{abstract}
Lightning routing nodes lend their channel balance for a posted fee. Operators'
tools move that fee with the balance around a hand-set level, and
which part of the fee drives revenue is unclear. We present \textsc{Trim},
which models a channel as a two-sided inventory with costly refills. Its optimal
fee is a \emph{level}, set by demand and replenishment, plus a balance \emph{skew},
the shadow price of liquidity. The level is first order: no fee rule beats the
ceiling demand and replenishment set, and on exactly solved channels the best flat fee comes within
$1.6\%$ of the optimum at the median, whereas a level off by a factor of two loses
tens of percent. The skew is second-order insurance: it adds at most what the
flow-balancing flat fee loses to refused forwards. So the level should be chosen: \textsc{Trim}'s rule traces the revenue--success frontier, posts one level on every channel, and adds a skew only where refills stay
frequent and costly. On a 2026 snapshot with synthetic payments,
no tuned tool beats that frontier by over a few percent, and a skew changes
gross revenue by a few percent.
\end{abstract}

\begin{CCSXML}
<ccs2012>
<concept><concept_id>10010405.10003550.10003554</concept_id><concept_desc>Applied computing~Electronic funds transfer</concept_desc><concept_significance>500</concept_significance></concept>
<concept><concept_id>10003033.10003068.10003078</concept_id><concept_desc>Networks~Network economics</concept_desc><concept_significance>300</concept_significance></concept>
<concept><concept_id>10003752.10003809.10003716.10011138.10010046</concept_id><concept_desc>Theory of computation~Stochastic control and optimization</concept_desc><concept_significance>300</concept_significance></concept>
</ccs2012>
\end{CCSXML}
\ccsdesc[500]{Applied computing~Electronic funds transfer}
\ccsdesc[300]{Theory of computation~Stochastic control and optimization}
\ccsdesc[300]{Networks~Network economics}

\keywords{Lightning Network, payment channel networks, routing fees, dynamic
pricing, inventory control, shadow price, liquidity}

\maketitle

\section{Introduction}
\label{sec:intro}

\begin{figure}[t]
\centering
\begin{tikzpicture}[font=\small,>={Stealth[length=4pt]},scale=0.92,
  show background rectangle,inner frame sep=5pt,
  background rectangle/.style={fill=trimblue!5,draw=trimblue!45,rounded corners=3pt,line width=0.5pt}]
  \draw[->,draw=black!55,line width=0.4pt] (0,0) -- (6.25,0);
  \node[below,font=\scriptsize] at (1.8,0){balance $b$};
  \draw[->,draw=black!55,line width=0.4pt] (0,0) -- (0,1.85) node[left,font=\scriptsize]{fee};
  \draw[draw=trimblue,line width=2.0pt] (0,0.55) -- (6,0.55);
  \draw[draw=feeorange!70,line width=0.7pt,dashed,domain=0.1:6,samples=40]
    plot (\x,{0.55+0.8*(exp(-0.85*\x)-exp(-0.85*3.6))});
  \draw[dotted,draw=black!45] (3.6,0) -- (3.6,0.55);
  \node[below,font=\scriptsize] at (0,0){$0$};
  \node[below,font=\scriptsize] at (3.6,0){$\bar b$};
  \node[below,font=\scriptsize] at (6,0){$c$};
  \node[font=\scriptsize\bfseries,text=trimblue!85!black,anchor=north east] at (6,0.5){level $L$};
  \node[font=\scriptsize,text=feeorange!80!black,anchor=west] at (0.3,1.35){skew};
  \begin{scope}[shift={(2.4,1.25)}]
    \draw[chanbox] (0,0) rectangle (3.6,0.32);
    \fill[invblue!35] (0,0) rectangle (2.1,0.32);
    \node[font=\scriptsize] at (1.05,0.16){balance $b$};
    \node[font=\scriptsize] at (2.85,0.16){$c-b$};
    \draw[->,draw=red!70!black,line width=0.7pt] (-0.55,0.16) -- (-0.04,0.16);
    \node[font=\scriptsize,text=red!70!black,anchor=south] at (-0.3,0.3){refill};
    \draw[feearrow] (1.65,-0.2) -- (0.95,-0.2);
    \node[font=\scriptsize,text=feeorange!75!black,anchor=east] at (0.9,-0.2){demand};
    \draw[flowarrow] (1.95,-0.2) -- (2.5,-0.2);
    \node[font=\scriptsize,text=repgreen!60!black,anchor=west] at (2.55,-0.2){replenishment};
  \end{scope}
\end{tikzpicture}
\caption{The \textsc{Trim} model. Inset: a channel as an inventory, drawn
down by demand and restored by replenishment or a costly refill. Plot: the
optimal fee, a level $L$ plus a skew.}
\label{fig:model}
\Description{A channel drawn as a bar split into the node's balance and the peer's side. Demand, priced by the node, depletes the balance; replenishment, priced by the peer, restores it; an on-chain refill restores it. Above, the optimal fee against the balance: a flat level plus a skew that rises as the channel drains.}
\end{figure}

\paragraph{Routing nodes and their fees.}
Payment channel networks (PCNs) such as the Lightning
Network~\cite{poon2016lightning} settle payments off-chain and route them
hop by hop through intermediary \emph{routing nodes}, which lend their
channel balance to strangers for a posted fee. About $5{,}500$
self-interested nodes and $27{,}000$ public channels
post these prices, $150$\,ppm at the median channel direction; as Lightning carries
small payments for online services and content~\cite{l402,nip57}, they set
what such payments cost and whether they succeed. Forwarding out of a channel draws down the node's side, which is replenished
only when payments flow back; a channel that runs dry stalls until reverse flow arrives or the operator pays for an
on-chain refill~\cite{khalil2017revive}.

\paragraph{How fees are set today.}
Operators who manage liquidity run
rule-based tools (\texttt{charge-lnd}~\cite{chargelnd}, feeadjuster~\cite{feeadjuster}, LNDg~\cite{lndg}) that raise a
channel's fee as its balance drains and lower it as the balance fills, around a
level set by hand or inherited from the deployed
fee. In 2019--2023 gossip the largest hubs reset that level $1.7$ times a year, by a median $2.3\times$, most channels at once (our measurement, App.~\ref{app:gossip}).
Which part of the fee drives a routing node's revenue?

\paragraph{\textsc{Trim}.}
A routing node is a two-sided inventory manager
(Fig.~\ref{fig:model}). Balance is inventory: demand depletes it, and
replenishment, priced by other nodes, restores it, both sides holding the
same asset~\cite{gallego1994optimal,avellaneda2008high}. Its optimal fee has two layers that do
different jobs. The \emph{level} sets how much demand the node prices in and whether
the channel drifts: at the flow-balancing fee demand matches replenishment and
the balance neither drains nor fills, while a level set too low drains the
channel and one set too high prices out demand. The \emph{skew}, the marginal value of balance, moves with the balance and only absorbs its fluctuations around that point. Neither layer can be read off public data: both depend
on how price-sensitive senders are and on flows a node does not observe. We call
this account \textsc{Trim}: a model of the channel, the two-layer fee it
implies, and the rule that follows.

\begin{finding}
{\color{trimblue}\textsc{Contributions}}\par\smallskip
\KN \emph{(1) A two-layer fee whose level is first order} (\S\ref{sec:single}--\S\ref{sec:hub}). We model a channel as a two-sided inventory with costly refills and derive its optimal fee, a level plus a skew. Under our assumptions no fee rule beats the revenue ceiling that demand and replenishment set, for any decreasing demand (Prop.~\ref{prop:ceiling}), and where demand outruns replenishment a level error costs in proportion to its size. At a hub the level stays uniform unless a channel drains (Prop.~\ref{prop:hub}).\par
\KN \emph{(2) The skew is second-order insurance} (\S\ref{sec:sg-value}). A skew adds at most what the flow-balancing flat fee loses to refused forwards (Prop.~\ref{prop:value}) and pays only where refills are costly.\par
\KN \emph{(3) A rule that makes the level a choice} (\S\ref{sec:trim}). Trace the revenue--success frontier with a few levels, pick a point and post it on every channel; add a skew only where refills stay frequent and costly.\par
\KN \emph{(4) Tuning mainly selects a level} (\S\ref{sec:eval}). On a 2026 snapshot with synthetic payments, uniform levels trace a revenue--success frontier, and our tuned reimplementations of the deployed tools lie on or near it: their balance rules add little at equal success.
\end{finding}

\paragraph{Results.}
On $144$ exactly solved channels a well-chosen flat fee is nearly optimal, within
$1.6\%$ at the median (at most $9.6\%$), while a level off by a factor of two loses tens of percent; on $2{,}592$ more with other size laws and demand curves the median gap is $1.1$--$2.3\%$.
On exactly solved hubs where a channel drains, per-channel levels, not a skew, recover half to three-quarters of what a uniform level loses.
In the network, no tuned tool beats the frontier of uniform levels by more than a few percent, levels per hub or per channel class give no gain, and a skew moves gross revenue by a few percent at most.
Tools win clearly only on 2019, with balance-aware senders, by pricing channels differently, not by a skew: deployed fees at their best uniform multiple do as well (\S\ref{sec:conclusion}). Code, raw outputs and a number checker will be made public upon acceptance.

\section{Background and Related Work}
\label{sec:background}

\begin{table}[t]
\caption{Related work. Alone: deployable by one node.}
\label{tab:related}
\Description{Table comparing families of related work and Trim by what moves the fee, how its level is set and whether one node can deploy it.}
\footnotesize
\setlength{\tabcolsep}{2.2pt}\renewcommand{\arraystretch}{0.95}
\begin{tabular}{@{}llll@{}}
\toprule
Family (work) & Fee moves with & Level & Alone \\
\midrule
Deployed tools {\scriptsize\cite{chargelnd,feeadjuster,lndg,clboss}} & balance rules & by hand & \checkmark \\
Pricing imbalance {\scriptsize\cite{distasi2018routing,vanengelshoven2021merchant,wang2024fence,sankagiri2025pricing,desilva2023mpcfc,asgari2022dyfen}} & imbalance, duals & absent/implicit & mostly \\
Pricing inventory {\scriptsize\cite{gallego1994optimal,talluri2004theory,chen2004infinite,ata2005drift,avellaneda2008high,gueant2013dealing}} & marginal value & markup & --- \\
Other {\scriptsize\cite{khalil2017revive,mishra2026feeder,ge2022shaduf,wu2026sluice,avarikioti2024musketeer,bastankhah2023r2,wang2022riding,avarikioti2020ride,ersoy2020profit,carotti2024rational}} & fixed/strategies & --- & --- \\
\rowcolor{trimblue!10}\textsc{Trim} (this paper) & shadow price & explicit & \checkmark \\
\bottomrule
\end{tabular}
\end{table}

\paragraph{Lightning fees.}
Each channel direction carries a fee set by the node on whose side a payment leaves, a base charge plus a proportional (ppm) rate~\cite{poon2016lightning}; since LND 0.18 a node can also post an inbound fee, at first only a discount~\cite{lndinbound}. Senders pick fee-minimizing routes~\cite{pickhardt2021flows}, which LND weighs against the success probabilities its \emph{mission control} learns from failures, so a node's demand is price-sensitive.

\paragraph{Deployed tools.}
Implementations ship static defaults (LND: $1{,}000$\,msat $+\,1$\,ppm~\cite{lnddefaults}). \texttt{charge-lnd} interpolates linearly in the local share~\cite{chargelnd}, feeadjuster scales a base rate by $50^{0.5-x}$~\cite{feeadjuster}, LNDg steps fees by flow and liquidity thresholds~\cite{lndg}, and CLBOSS scales competitors' median fee by size, balance and price~\cite{clboss}. LND's developers report that success falls with a channel's fee, yet only $10\%$ of nodes priced by liquidity in 2024~\cite{lnd2024pathfinding}. All tune the skew and inherit or search the level.

\paragraph{Pricing imbalance.}
Proposed mechanisms surcharge forwards that worsen balance~\cite{distasi2018routing,vanengelshoven2021merchant}, post the dual prices of an online packing program~\cite{wang2024fence}, or, in \textsc{Debt}~\cite{sankagiri2025pricing}, run network-wide dual descent toward the welfare-maximizing flow; node-profit controllers~\cite{desilva2023mpcfc,asgari2022dyfen} are not deployed. \textsc{Debt}'s prices are multipliers of flow conservation, our $\hat s$ one node's marginal value of its own balance, which it can post alone; the incentive component \textsc{Debt} leaves to practice is the level. Throughout, the balance term is the lever and the level is given or implicit.

\paragraph{Pricing an inventory.}
An inventory's optimal price is the marginal value of a unit plus the markup $1/\kappa$~\cite{gallego1994optimal,talluri2004theory}, with $(s,S,p)$ policies under fixed ordering costs~\cite{chen2004infinite}, drift control~\cite{ata2005drift,yao2017joint} and inventory-skewed quotes for market makers~\cite{avellaneda2008high,gueant2013dealing}; static prices are near-optimal in closed networks~\cite{banerjee2022pricing}. A channel's inventory is bounded on both sides and replenished at prices set by others, and flow balance is its zero-drift point.

\paragraph{Other related work.}
Rebalancing moves liquidity after imbalance forms~\cite{khalil2017revive,mishra2026feeder,ge2022shaduf,wu2026sluice,avarikioti2024musketeer} and admission control protects it~\cite{bastankhah2023r2}, both with fees fixed; hub equilibria and channel games~\cite{wang2022riding,avarikioti2020ride}, routing profitability~\cite{carotti2024rational,ersoy2020profit} and depletion levers~\cite{pickhardt2026theory} concern network outcomes. We study one node's split of its own fee; where levels settle when all reprice we leave open (App.~\ref{app:level}).

\paragraph{Positioning.}
\textsc{Trim} is a two-layer account of one node's fee and the rule that follows, not a routing, rebalancing or equilibrium design. No prior work we know derives the two layers and asks which one drives revenue (Table~\ref{tab:related}). Relative to inventory pricing, what we add is a revenue ceiling that binds every fee rule when replenishment is priced by others, a bound on what a skew adds over the best flat fee, and a diagnosis of the tools operators run.

\section{\textsc{Trim}: A Channel as an Inventory}
\label{sec:model}

\paragraph{Scope and state.}\label{sec:model-informal}
We model one channel of a routing node and its proportional rate, taking the
node's other channels as part of the network and its base fee as deployed
(Fig.~\ref{fig:model}); it omits time-locks, HTLC limits, multi-part payments and channel closures (Simplifications below). The state is the node's balance $b\in[0,c]$ on its side of a channel of capacity
$c$, its \emph{inventory}. \emph{Demand}, the forwards that leave through that
side, arrives at rate $\lambda(f)$ that falls with the proportional fee $f$, the
control; \emph{replenishment}, the forwards that enter, arrives at rate $\mu$.
A node prices only payments that leave through its side; those that enter are
priced by the peer and upstream nodes, so per channel $\mu$ is exogenous. The
canonical demand is $\lambda(f)=\Lambda e^{-\kappa f}$, the logit form of
least-cost route choice, with markup $1/\kappa$. Offered demand in our simulator
is close to it near deployed fees ($R^2=0.95$, App.~\ref{app:evalx}). Far from them it mixes
captive and competed demand, whose markup can grow faster than the fee (for our fitted mixture, between $52$ and $1{,}052$\,ppm), so we do not assume log-concavity.

\paragraph{Costs and objective.}
Unlike a market maker's, the inventory carries no price risk: both sides hold the
same asset. Risk enters only at the boundaries, where a forward larger than
the balance is refused at a goodwill cost $c_{\mathrm r}$ and an on-chain refill
of size $\Delta$ costs $K+k\Delta$. The node maximizes its long-run average net
revenue: fees on served demand, less refusal penalties and refill costs.

\paragraph{Formalization.}\label{sec:model-formal}
Demand and replenishment are marked point processes $N^{\mp}$ with
size laws $G$ and $G^{+}$, and refills add $\Delta_i>0$ at times $\tau_i$, so
\begin{equation}
\begin{aligned}
  b_t = b_0
  &-\!\int_0^{t}\!\!\int\!\xi\,\mathbf 1\{\xi\le b_{u^-}\}\,N^{-}(du,d\xi)\\
  &+\!\int_0^{t}\!\!\int\!\eta\,\mathbf 1\{\eta\le c-b_{u^-}\}\,N^{+}(du,d\eta)
  +\!\sum_{\tau_i\le t}\!\Delta_i .
\end{aligned}
\label{eq:inv-dyn}
\end{equation}

\begin{assumption}[Demand]
\label{ass:demand}
$\lambda:[0,\infty)\to(0,\infty)$ is $C^1$ with $\lambda'<0$ and
$f\lambda(f)\to0$ as $f\to\infty$; its markup is $\psi(f):=-\lambda(f)/\lambda'(f)$.
\end{assumption}

\begin{assumption}[Depletion cost]
\label{ass:depletion}
(i) A refused forward of demand costs a goodwill penalty $c_{\mathrm r}\ge0$
plus the forgone fee; refused replenishment costs nothing directly.
(ii) A refill of size $\Delta$ costs $K+k\Delta$, with $K>0$ and $k\ge0$.
(iii) Service of feasible forwards is compulsory: the node prices demand away rather than refusing a feasible forward, so refusals occur only when a forward exceeds the balance.
\end{assumption}

A policy $\pi$ is a predictable fee process together with a refill rule; a
Markov policy sets $f_t=\varphi(b_{t^-})$. Let $N^{\mathrm r}_t$ count refused
forwards of demand. The expected discounted net revenue is
\begin{equation}
  J_\rho(\pi)=\mathbb E^{\pi}_{b_0}\Big[\!\int_0^{\infty}\!\!e^{-\rho t}
  \big(f_t\,\xi\,dN^{-}_{\mathrm{served}}-c_{\mathrm r}\,dN^{\mathrm r}_t\big)
  -\sum_i e^{-\rho\tau_i}(K+k\Delta_i)\Big],
  \label{eq:objective}
\end{equation}
with $\rho>0$ and value function $V(b)=\sup_\pi J_\rho(\pi)$. For the
average net revenue $g=\sup_\pi\liminf_{T\to\infty}\frac1T\,\mathbb E^\pi[\cdot]$
the optimality equation has the same form, with $\rho V$ replaced by $g$ and $V$
by the bias $h$~\cite{hernandezlerma1996markov}. We prove the structural results below for $\rho>0$ and take the average criterion as
their limit $\rho\to0$, which we assume exists (a numerical assumption: relative value iteration
converges on every instance we solve).

\paragraph{Simplifications.}
The model has no time-locks, HTLC limits, multi-part payments or channel
closures (the simulator has HTLC and MPP limits as options). A refill restores the balance at once, and demand depends on the channel's own fee only. We do not model inbound fees, with which a node would also price its replenishment; in an exact extension the inbound fee is a second level, and level errors still cost more than state dependence adds (App.~\ref{app:single}).

\section{The Channel Fee: A Level plus a Skew}
\label{sec:single}

For one channel, the level is $L=f^{*}(b^{*})$, the optimal fee at the balance $b^{*}$
where the drift vanishes ($c$ if none), and the skew is the rest, $f^{*}(b)-L$; a flat fee
is a level alone. This split is a definition; what we prove is what a wrong level
costs and what any balance-dependent fee can add over the best flat one.

We derive the two layers (\S\ref{sec:sg-foc}), then the level (\S\ref{sec:sg-level}) and the skew (\S\ref{sec:sg-value}), check both on exactly solved instances and map where the skew pays (\S\ref{sec:sg-exact}). Each proposition's title says whether it is exact or uses the diffusion approximation, which we check against exact solutions. Proofs are in
App.~\ref{app:single}; Table~\ref{tab:notation} lists the notation. Write $\bar G(x)=\mathbb P(\xi>x)$,
$m(x)=\mathbb E[\xi\,\mathbf 1\{\xi\le x\}]$, $m_1=\mathbb E\xi$,
$m_2=\mathbb E\xi^2$ and $m_1^{+}=\mathbb E\eta$.

\begin{figure*}[t]
\centering
\includegraphics[width=\textwidth]{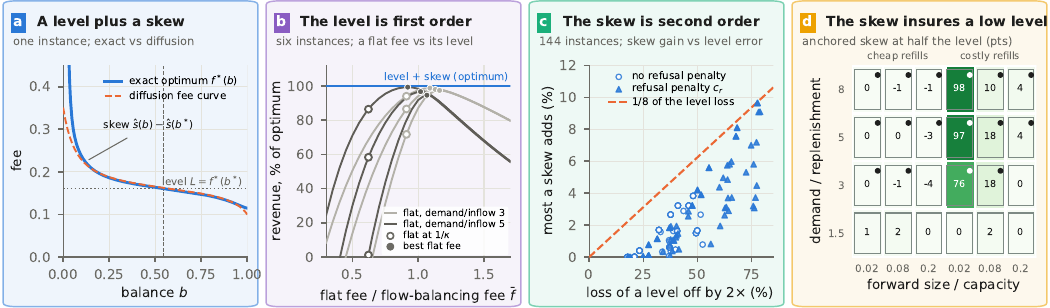}
\caption{The exact single channel: the level is first order, the skew second-order insurance. (a) One instance: the optimal fee, a level $L=f^{*}(b^{*})$ plus a skew, against the diffusion curve. (b) Six instances: a flat fee's revenue against its level, relative to the optimum (filled: best flat fee; hollow: the markup $1/\kappa$). (c) $144$ instances: the most any balance-dependent fee adds over the best flat fee, against the loss of a level off by a factor of two. (d) What the anchored skew~\eqref{eq:trim} recovers at half the best level, by ratio of demand to replenishment, forward size and refill cost (cell medians; dot: a cell where the half level drains some channel into a net loss).}
\label{fig:theory}
\Description{Four panels. (a) Fee against balance for one exactly solved channel: a nearly flat fee that rises steeply near an empty channel, with the diffusion curve on top of it. (b) Revenue of a flat fee against its level for six instances, peaking within a few percent of the optimum near the flow-balancing fee and falling steeply away from it. (c) Scatter of 144 instances: what a skew adds against the loss from a level off by a factor of two; all points lie below a line at one eighth. (d) Heat map of what the anchored skew recovers at half the best level: most of the optimum only where refills are costly and forwards small, little or nothing elsewhere.}
\end{figure*}

\begin{table}[t]
\centering
\caption{Notation.}
\label{tab:notation}
\Description{Table of the model's symbols.}
\footnotesize
\setlength{\tabcolsep}{2.5pt}\renewcommand{\arraystretch}{0.95}
\begin{tabular}{@{}l>{\raggedright\arraybackslash}p{0.17\textwidth}l>{\raggedright\arraybackslash}p{0.17\textwidth}@{}}
\toprule
$b,\ c$ & balance, capacity & $\lambda(f)$ & demand rate, $\Lambda e^{-\kappa f}$ \\
$\mu$ & replenishment rate & $\psi$ & markup, $1/\kappa$ \\
$\xi,\ \eta$ & forward sizes, out / in & $m_1,m_2;\ m_1^{+}$ & moments of $\xi$; mean of $\eta$ \\
$c_{\mathrm r}$ & refusal penalty & $K+k\Delta$ & refill cost \\
$L,\ \hat s(b)$ & level, skew & $\bar f$ & flow-balancing fee \\
$g;\ R^{*}$ & average revenue; its ceiling at flow balance & $\nu_0$ & variance rate of $b$ \\
$b^{*};\ \hat s^{*}$ & flow-balance point; its shadow price & $\Phi(s),F(s)$ & fee income at shadow price $s$; plus inventory value \\
\bottomrule
\end{tabular}
\end{table}

\subsection{Two layers}
\label{sec:sg-hjb}\label{sec:sg-foc}

For a fee $f$ and a value function $w$ on $[0,c]$, the controlled generator is
\begin{equation}
\begin{aligned}
  (\mathcal L^{f} w)(b)=
  &\;\lambda(f)\Big[\int_{0}^{b}\!\big(f\xi+w(b-\xi)-w(b)\big)G(d\xi)
      -c_{\mathrm r}\bar G(b)\Big]\\
  &+\mu\int_{0}^{c-b}\!\big(w(b+\eta)-w(b)\big)G^{+}(d\eta),
\end{aligned}
\label{eq:gen}
\end{equation}
and the refill operator is
$(\mathcal M w)(b)=\sup_{0<\Delta\le c-b}\{w(b+\Delta)-K-k\Delta\}$. The first
bracket is a served forward's fee plus its change in inventory value, less
the goodwill cost of refusals; the second term is replenishment.

\begin{proposition}[Optimality eq\kern0ptuation; exact]
\label{prop:single-opt}
Under Assumptions~\ref{ass:demand}--\ref{ass:depletion}, $V$ is the unique
bounded solution of the quasi-variational inequality
\begin{equation}
  \min\big\{\rho V(b)-\sup_{f\ge0}(\mathcal L^{f}V)(b),\;
            V(b)-(\mathcal MV)(b)\big\}=0,\quad b\in[0,c].
  \label{eq:qvi}
\end{equation}
\end{proposition}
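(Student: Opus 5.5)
We prove Proposition~\ref{prop:single-opt} in three steps: a contraction argument gives a unique bounded fixed point $W$ of the impulse-control Bellman operator, we check that $W$ solves~\eqref{eq:qvi}, and a verification argument identifies $W$ with $V$.

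\textbf{Step 1: reduce the fee to a bounded set.} By Assumption~\ref{ass:demand}, $f\lambda(f)\to0$ as $f\to\infty$. Every admissible $w$ is bounded, so the change in inventory value is bounded by $2\|w\|_\infty$. Hence $\sup_{f\ge0}(\mathcal L^f w)(b)$ is finite, and it is attained on a compact set $[0,\bar f]$ uniformly in $b$; the bound $\bar f$ may depend on $\|w\|_\infty$, and we restrict to a ball $\|w\|_\infty\le M$ with $M$ chosen below. Total jump intensities are then bounded by $\Lambda_{\max}:=\lambda(0)+\mu$.

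\textbf{Step 2: uniformize and get a contraction.} We uniformize at rate $\Lambda_{\max}$. For bounded $w$ define
\[
(Tw)(b)=\max\Big\{\sup_{f\in[0,\bar f]}\tfrac{1}{\rho+\Lambda_{\max}}\big[(\mathcal L^f w)(b)+\Lambda_{\max}w(b)\big],\;(\mathcal Mw)(b)\Big\}.
\]
The first branch is a contraction with modulus $\Lambda_{\max}/(\rho+\Lambda_{\max})<1$. The refill branch is not a contraction on its own. We handle it with $K>0$ from Assumption~\ref{ass:depletion}(ii): it is never optimal to refill twice at the same instant, since one refill of size $\Delta_1+\Delta_2$ is cheaper by $K$. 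We therefore define the refill branch as $\sup_\Delta\{(T_0w)(b+\Delta)-K-k\Delta\}$, where $T_0$ is the continuation operator. The composite operator is again a contraction in the sup norm, and a fixed point $W$ exists uniquely in the ball $\|w\|_\infty\le M$. For $M$ we take the crude bound $\sup_f f\lambda(f)\,m_1/\rho$ on rewards, plus the refusal and refill costs of a never-refill policy on the loss side.

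\textbf{Step 3: fixed point solves~\eqref{eq:qvi}.} The rearrangement is routine. $W\ge\mathcal MW$ holds because refilling is one option. $\rho W\ge\sup_f\mathcal L^fW$ holds because continuing is an option, and at each $b$ one of the two holds with equality. Any bounded solution of~\eqref{eq:qvi} is also a fixed point of $T$, so the solution is unique among bounded functions.

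\textbf{Step 4: verification ($W=V$).} For an arbitrary policy $\pi$ we apply Dynkin's formula to $e^{-\rho t}W(b_t)$ along the dynamics~\eqref{eq:inv-dyn}. This is legitimate for merely bounded measurable $W$, because $b_t$ is a pure-jump process with bounded intensity and refill impulses, and the compensator of each jump term is given exactly by~\eqref{eq:gen}. The inequality $\rho W\ge\mathcal L^{f_t}W$ bounds the running rewards, and $W\ge\mathcal MW$ bounds each refill term $W(b_{\tau_i})-W(b_{\tau_i^-})\le -(K+k\Delta_i)$. Letting $t\to\infty$ with bounded $W$ gives $J_\rho(\pi)\le W(b_0)$. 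A refill policy with infinitely many refills in finite time has $J=-\infty$, because $K>0$, so such policies are harmless.

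For the reverse inequality we build an $\varepsilon$-optimal Markov policy. We take $\varepsilon$-maximizing fees $\varphi(b)$, refill when $W(b)\le\mathcal MW(b)+\varepsilon$, and choose $\Delta$ $\varepsilon$-optimally; measurable selection is possible on the compact state space. Running Dynkin's formula for this policy gives $J\ge W(b_0)-C\varepsilon$.

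\textbf{Main obstacle.} The delicate point is the impulse part in Step~4. We need measurable $\varepsilon$-selectors, and we must show the selected refill policy has only finitely many refills on finite intervals. Both come from $K>0$: after a refill, $W(b+\Delta)-K-k\Delta<W(b+\Delta)$ forbids immediately refilling again up to $\varepsilon<K$. Compulsory service (Assumption~\ref{ass:depletion}(iii)) ensures the generator has exactly the form~\eqref{eq:gen}, with no refusal decision to optimize.
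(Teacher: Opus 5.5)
Your core argument is the paper's. You uniformize at $q=\lambda(0)+\mu$ and use $K>0$ to rule out back-to-back refills, so the operator is $w\mapsto\max\{T_0w,\mathcal M T_0w\}$. This is a $q/(\rho+q)$-contraction because $\mathcal M$ is nonexpansive. You then rearrange the fixed-point equation into \eqref{eq:qvi} and note that every bounded solution is such a fixed point. Your Step~4 adds something the paper only asserts, namely that $V$ is this fixed point. The Dynkin bound plus an $\varepsilon$-optimal Markov policy is the right way to supply it. The refill count stays controlled because that policy refills at most once at time $0$ and once after each jump event. This needs $2\varepsilon<K$, not $\varepsilon<K$, once $\Delta$ is only $\varepsilon$-optimal. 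With that, the accumulated $\varepsilon$-losses are $O(\varepsilon)$.

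Step~1, however, should be deleted rather than patched. It claims the supremum over fees is attained on some $[0,\bar f]$ uniformly in $b$, which is false when $c_{\mathrm r}>0$. At $b=0$, or any $b$ with $G(b)=0$, the fee-dependent part of \eqref{eq:gen} is $-c_{\mathrm r}\lambda(f)\bar G(b)$. This rises strictly toward $0$ as $f\to\infty$ and is never attained. Truncating the fee set therefore changes the operator at those states, so your fixed point solves the inequality with $\sup_{f\in[0,\bar f]}$ rather than \eqref{eq:qvi}. Working in a ball $\|w\|_\infty\le M$ also gives uniqueness only inside that ball. Neither device is needed. The bound $\lambda(f)\le\lambda(0)$ already controls the event rate, and the reward rates are bounded uniformly in $f$: fee income by $m_1\sup_f f\lambda(f)$, refusal cost by $c_{\mathrm r}\lambda(0)$. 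So $T_0$ with $\sup_{f\ge0}$ is a contraction on all bounded functions, which is how the paper proceeds.
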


Only the demand bracket of \eqref{eq:gen} depends on $f$. Define the
\emph{shadow price of liquidity}
\begin{equation}
  \hat s(b)=\frac{\int_0^{b}\big(V(b)-V(b-\xi)\big)G(d\xi)
        +c_{\mathrm r}\bar G(b)}{m(b)},
  \label{eq:shadow}
\end{equation}
the inventory value the node gives up per unit forwarded, including the
expected refusal cost.

\begin{proposition}[Optimal fee structure; exact]
\label{prop:fee-struct}
In the continuation region the optimal fees maximize $\lambda(f)\,(f-\hat s(b))$; each positive one solves
\begin{equation}
  f^{*}(b)=\hat s(b)+\psi\big(f^{*}(b)\big),
  \label{eq:fee-foc}
\end{equation}
and the largest, $f^{*}(b)$, is nondecreasing in $\hat s(b)$; it is the only root if $f-\psi(f)$
increases (log-concave $\lambda$), else it can jump. For $\lambda=\Lambda e^{-\kappa f}$,
$f^{*}(b)=\hat s(b)+1/\kappa$.
\end{proposition}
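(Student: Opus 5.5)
The plan is to start from the quasi-variational inequality of Prop.~\ref{prop:single-opt}. In the continuation region the first branch binds, so $\rho V(b)=\sup_{f\ge0}(\mathcal L^{f}V)(b)$. Only the demand bracket of \eqref{eq:gen} depends on $f$, so I would rewrite that bracket with the definition \eqref{eq:shadow}. Since $\int_0^b f\xi\,G(d\xi)=f\,m(b)$, the bracket equals
\[
\lambda(f)\big[f\,m(b)-\hat s(b)\,m(b)\big]=m(b)\,\lambda(f)\,\big(f-\hat s(b)\big).
\]
For $b>0$ with $m(b)>0$, maximizing over $f$ is therefore the same as maximizing $h_b(f):=\lambda(f)(f-\hat s(b))$; the replenishment term does not involve $f$. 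Where $m(b)=0$, every fee is optimal and the statement holds vacuously, or by convention.

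Next I would show that the maximum is attained. Under Assumption~\ref{ass:demand}, $h_b$ is continuous, $h_b(f)=f\lambda(f)-\hat s\lambda(f)\to0$ as $f\to\infty$, and $h_b$ is positive for $f>\hat s$ when $\hat s\ge0$. So a maximizer exists on $[0,\infty)$, and the set of maximizers is compact. At an interior maximizer $f>0$, the first-order condition gives $\lambda'(f)(f-\hat s)+\lambda(f)=0$. Dividing by $-\lambda'(f)>0$ yields $f-\hat s-\psi(f)=0$, which is \eqref{eq:fee-foc}.

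For monotonicity I would use a single-crossing argument rather than the first-order condition. For $s'>s$, the difference $h(f;s)-h(f;s')=\lambda(f)(s'-s)$ is decreasing in $f$, so $\log h$ or $h$ itself has increasing differences in $(f,-s)$. More directly, let $f_1$ be the largest maximizer at $s$ and $f_2$ any maximizer at $s'$. Adding the two optimality inequalities gives $(s'-s)(\lambda(f_2)-\lambda(f_1))\ge0$. Because $\lambda$ is strictly decreasing, this forces $f_2\ge f_1$ whenever the maximizers are distinct. Topkis' theorem then gives that the largest maximizer is nondecreasing in $\hat s$.

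If $f-\psi(f)$ is increasing, the first-order equation $f-\psi(f)=\hat s$ has at most one root, so the maximizer is unique. Without log-concavity, $h_b$ can have several local maxima, and the argmax can switch between them as $\hat s$ varies, which produces a jump. For $\lambda(f)=\Lambda e^{-\kappa f}$ we have $\psi\equiv1/\kappa$, so $f^*=\hat s+1/\kappa$, provided this value is positive.

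I expect the main obstacle to be the boundary and sign cases. If $\hat s(b)<0$, or if the maximizer sits at $f=0$, the first-order condition holds only as an inequality; this is why the statement restricts to positive optimal fees. I also need the identification with the continuation region to be valid, meaning that $V$ is regular enough for $\hat s$ to be well defined. That follows from $V$ being the bounded solution in Prop.~\ref{prop:single-opt}, since $\hat s$ involves only values of $V$ and no derivatives.
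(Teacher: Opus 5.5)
Your proof is correct and follows the paper's proof in rewriting the demand bracket as $m(b)\lambda(f)(f-\hat s(b))$, in the existence of a maximizer, in the first-order condition and in the exponential case. The two differ only in the monotonicity step.

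The paper works with $\log u$ on the region $\{f>s\}$ where $u>0$; any maximizer at $s'$ exceeds $s'$, since the supremum is positive. It uses the strict cross-partial $\partial^2\log u/\partial f\,\partial s=(f-s)^{-2}>0$ and argues by contradiction that no maximizer at $s'>s$ lies below a maximizer at $s$. You instead use additive increasing differences: $h(f;s')-h(f;s)=-(s'-s)\lambda(f)$ is strictly increasing in $f$, so summing the two optimality inequalities gives the same conclusion (every maximizer at $s'$ is at least every maximizer at $s$) in one line. Your route needs only that $\lambda$ is strictly decreasing, with no logarithms, no restriction to the positivity region and no differentiability. The paper's log form is written in the same coordinates as $\psi=-\lambda/\lambda'$, so the first-order condition and the log-concavity uniqueness claim come out of one computation.

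Two sign slips need fixing.
\begin{itemize}
\item Summing $h(f_1;s)\ge h(f_2;s)$ and $h(f_2;s')\ge h(f_1;s')$ gives $(s'-s)\big(\lambda(f_1)-\lambda(f_2)\big)\ge0$, not the inequality you wrote. Read literally, yours would force $f_2\le f_1$.
\item $h$ has increasing differences in $(f,s)$, not in $(f,-s)$.
\end{itemize}
Once these are corrected, the direct argument already proves the claim, so the appeal to Topkis is redundant.

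Two smaller points. Uniqueness of the root needs $f-\psi(f)$ \emph{strictly} increasing. At a $b$ with $m(b)=0$ it is not true that every fee is optimal: the bracket reduces to $-c_{\mathrm r}\bar G(b)\lambda(f)$, which has no maximizer when $c_{\mathrm r}\bar G(b)>0$. Since $\hat s(b)$ is undefined there, the proposition simply does not cover such $b$.
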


This is the classical structure of inventory
pricing~\cite{gallego1994optimal}, congestion-dependent pricing of
queues~\cite{low1974optimal,paschalidis2000congestion,maglaras2003pricing} and
risk-neutral market making~\cite{gueant2013dealing}; what is specific to a channel is how the
shadow price depends on the balance.

\subsection{The level is first order}
\label{sec:sg-level}

By first order we mean that a level off by a factor costs more than the skew can add (\S\ref{sec:sg-value}): in proportion to the error where flow balance binds, and $1-re^{1-r}$ ($26\%$ at $r=2$) at the markup peak. Let $\bar f$ be the \emph{flow-balancing fee}, at which demand
balances replenishment in volume, $m_1\lambda(\bar f)=\mu m_1^{+}$; under
exponential demand $\bar f=\frac1\kappa\ln\frac{\Lambda m_1}{\mu m_1^{+}}$. When flow
balance lies above the refill trigger, in the diffusion approximation the level is $\max\{1/\kappa,\bar f\}$:
the shadow price there is $\max\{0,\bar f-1/\kappa\}$, so the level is the markup $1/\kappa$ when replenishment keeps
up with demand at that fee, else $\bar f$. Either way it is set by
demand and replenishment alone and scales as $1/\kappa$.

\paragraph{A revenue ceiling at flow balance.}
Write $\Phi(s):=m_1\sup_{f\ge0}\lambda(f)(f-s)$ for the fee income of a node that values liquidity at $s$, and $F(s):=\Phi(s)+\mu m_1^{+}s$ for its fees plus inventory value. $F$ is convex for any demand, $\Phi$ being a supremum of affine functions of $s$
(for exponential demand $\Phi(s)=\frac{m_1\Lambda}{\kappa e}e^{-\kappa s}$);
where some $\hat s^{*}\in[0,k]$ has $\bar f$ as its best fee, its minimum over $[0,k]$ is attained there, where demand balances replenishment
($\hat s^{*}=\bar f-1/\kappa$ for exponential demand), and equals
$R^{*}:=\mu m_1^{+}\bar f$, the fee income of a channel held exactly at flow
balance.

\begin{proposition}[Revenue ceiling; exact]
\label{prop:ceiling}
Let the long-run rate of replenishment volume served be at most $\mu m_1^{+}$,
whether $\mu$ is constant, state-dependent or set by the network. Under
Assumptions~\ref{ass:demand}--\ref{ass:depletion}, every policy, with
any fee rule and any refills, earns $g\le F(s)$ for every $s\in[0,k]$, so
$g\le F_k:=\min_{0\le s\le k}F(s)$. For any decreasing demand under which $\bar f$ maximizes $\lambda(f)(f-s)$ for some $s=\hat s^{*}\in[0,k]$
(for exponential demand, $\Lambda m_1\ge e\,\mu m_1^{+}$ and $k\ge\hat s^{*}$), $F_k=R^{*}$: no policy earns
more than the fee income of a channel held at flow balance.
\end{proposition}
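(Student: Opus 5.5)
The plan is a Lagrangian, or flow-conservation, relaxation: price the inventory at a constant $s\in[0,k]$ and show that no policy can earn more than the resulting relaxed rate $F(s)$. Fix any policy $\pi$ and $T>0$. Over $[0,T]$, the inventory balance~\eqref{eq:inv-dyn} gives
\[
b_T-b_0=-D_T+U_T+\textstyle\sum_{\tau_i\le T}\Delta_i ,
\]
where $D_T$ is the served demand volume and $U_T$ the served replenishment volume. Multiply by $s$ and add $s(D_T-U_T-\sum\Delta_i)+s(b_T-b_0)=0$ to the net revenue. This rewrites the revenue as:
\begin{itemize}
\item fee income on served demand, less $s$ per unit served: $\sum (f_t-s)\xi$;
\item plus $sU_T$;
\item minus the refill costs adjusted for inventory value, $\sum_i\bigl(K+(k-s)\Delta_i\bigr)$;
\item minus the refusal penalties $c_{\mathrm r}N^{\mathrm r}_T$;
\item plus the boundary term $s(b_T-b_0)$.
\end{itemize}
Since $s\le k$ and $K>0$, the refill term is nonpositive and can be dropped. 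Refusal penalties are nonnegative costs and can also be dropped. The boundary term is at most $sc$, which vanishes after dividing by $T$.

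Next, bound the demand term by compensation. Served demand is a thinning of $N^-$ by the indicator $\mathbf 1\{\xi\le b_{u^-}\}$, with a predictable fee. The expected value of $\sum(f_t-s)\xi$ over served forwards therefore equals
\[
\mathbb E\int_0^T\lambda(f_u)\,\mathbb E\bigl[(f_u-s)\xi\mathbf 1\{\xi\le b_{u^-}\}\bigr]\,du .
\]
Where $f_u\ge s$, the thinning only lowers the integrand, so it is at most $\lambda(f_u)(f_u-s)m_1$. Where $f_u<s$, the integrand is negative and bounded above by $0$. In both cases it is at most $\Phi(s)$, since $\Phi(s)=m_1\sup_f\lambda(f)(f-s)\ge0$ by Assumption~\ref{ass:demand} (let $f\to\infty$). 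So the expected demand term is at most $T\Phi(s)$.

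The replenishment term $sU_T$ has long-run rate at most $s\mu m_1^{+}$ by hypothesis. Combining the pieces and taking $\liminf_T\frac1T$ gives $g\le\Phi(s)+s\mu m_1^{+}=F(s)$ for each $s\in[0,k]$, hence $g\le F_k$. I expect the main obstacle to be this compensator step: handling marked point processes with a predictable fee, and making sure the thinning inequality runs the right way when $f_u<s$, which needs $\Phi\ge0$. Working with the raw average criterion avoids any reliance on Proposition~\ref{prop:single-opt} or on the limit $\rho\to0$.

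For the second claim, I use convexity. Suppose $\bar f$ maximizes $\lambda(f)(f-\hat s^{*})$. Then, by the envelope/Danskin argument, $-m_1\lambda(\bar f)$ is a subgradient of $\Phi$ at $\hat s^{*}$. Flow balance $m_1\lambda(\bar f)=\mu m_1^{+}$ then gives $0\in\partial F(\hat s^{*})$. Because $F$ is convex (as noted before the statement) and $\hat s^{*}\in[0,k]$, it follows that $F_k=F(\hat s^{*})$. Evaluating,
\[
F(\hat s^{*})=m_1\lambda(\bar f)(\bar f-\hat s^{*})+\mu m_1^{+}\hat s^{*}=\mu m_1^{+}\bar f=R^{*}.
\]
For exponential demand the maximizer of $\lambda(f)(f-s)$ is $s+1/\kappa$, so $\hat s^{*}=\bar f-1/\kappa$. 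The condition $\hat s^{*}\ge0$ is exactly $\Lambda m_1\ge e\mu m_1^{+}$, and $\hat s^{*}\le k$ is the stated condition on $k$.
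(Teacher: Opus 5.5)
Your proof is correct and is essentially the paper's argument: price inventory at $s\in[0,k]$, use flow conservation to trade served demand volume for replenishment plus refilled volume, bound $(f-s)q\le\Phi(s)$ (using $\Phi\ge0$ when $f<s$), drop refill and refusal costs because $s\le k$, and obtain $F(s)\ge R^{*}$ with equality at $\hat s^{*}$ from $\Phi(s)\ge m_1\lambda(\bar f)(\bar f-s)$. That last inequality is your subgradient step, read off directly from the definition of $\Phi$ with no need for Danskin or convexity, and your finite-horizon compensator treatment is a more rigorous version of the paper's long-run-rate accounting. One harmless slip: you must subtract, not add, the zero identity, so the boundary term is $s(b_0-b_T)$ rather than $s(b_T-b_0)$, which is still bounded by $sc$.
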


In words: no fee rule, flat or balance-dependent, beats the ceiling that demand and replenishment set.

\paragraph{A wrong level.} It drains the channel or starves it. Let $d(f)=\mu
m_1^{+}-m_1\lambda(f)$ be the drift of the balance under a constant fee $f$, so
$d(\bar f)=0$. Below $\bar f$ a static fee drains the channel at rate $|d(f)|$, above it
starves it. A wrong level is also costly. When demand outruns replenishment, the markup is the
wrong level: on six exactly solved instances a flat fee there loses
$6$--$99\%$ (Fig.~\ref{fig:theory}(b)).

\subsection{The skew is second-order insurance}
\label{sec:sg-closed}\label{sec:sg-value}

\paragraph{Why its revenue value is small.}
Let $\chi$ be the scale over which the marginal value changes and
$\varepsilon:=m_2/(m_1\chi)$. When $\varepsilon\ll1$ and the balance lies well above $0$,
$\hat s=h'+O(\varepsilon)$ and, in the diffusion approximation~\cite{ata2005drift,cao2018drift}
(checked against the exact model in App.~\ref{app:single}), the average-reward optimality equation reduces to
\begin{equation}
  g=\Phi(h')+\mu m_1^{+}h'+\tfrac{\nu_0}{2}h'',
  \label{eq:hjb-diff}
\end{equation}
where $\nu_0=\lambda(\bar f)m_2+\mu\,\mathbb E\eta^2$ is the variance rate of
the balance at flow balance, held constant. Refused replenishment at a full channel is costless, so $h'(c)=0$.

In the diffusion limit the shadow price falls with the balance (Proposition~\ref{prop:skew}).

\paragraph{What it can add.} In words: over the best flat fee, a skew adds at most what that fee loses to refused forwards (part~(i)); its job is to keep a wrong level from draining the channel (part~(ii)).

\begin{proposition}[Value of state dependence; partly diffusion]
\label{prop:value}
Under exponential demand with
$\Lambda m_1>e\,\mu m_1^{+}$:
\emph{(i)} let $k\ge\hat s^{*}$; then the optimal policy earns $g<R^{*}$. If demand and
replenishment forwards share one size law, the static fee $\bar f$ without refills
earns exactly $R^{*}-\Gamma$, with $\Gamma:=\lambda(\bar f)\big(\bar f\,\mathbb E[\xi(\xi\wedge c)]+c_{\mathrm r}\mathbb E[\xi\wedge c]\big)/c$,
which is $(\bar f+c_{\mathrm r}m_1/m_2)\,\nu_0/(2c)$ when $\xi\le c$, as it is for any
size laws in the diffusion regime; the optimal policy gains at most $\Gamma$ over the best
static fee, and in the diffusion regime, if $b^{*}$ exists, the skew slope $a=2(R^{*}-g)/\nu_0$ at flow balance satisfies $a\,c\le\bar f+c_{\mathrm r}m_1/m_2$.
\emph{(ii)} A static fee $f_0$ loses (or must refill) net volume at rate at least $|d(f_0)|$, by flow conservation, and at rate $|d(f_0)|\,(1+o(1))$ in the diffusion regime with $|d(f_0)|\,c\gg\nu_0$. An affine skew with the same level,
$f(b)=f_0-a(b-b_0)$, has zero drift at
$b_{\mathrm{eq}}=b_0+(f_0-\bar f)/a$; if $b_{\mathrm{eq}}\in(0,c)$ its loss
rate is exponentially small in
$\theta\min(b_{\mathrm{eq}},c-b_{\mathrm{eq}})^2/\nu_0$, with
$\theta=\kappa\mu m_1^{+}a$ its rate of mean reversion (App.~\ref{app:single}).
\end{proposition}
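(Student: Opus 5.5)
For part~(i) I would start from Proposition~\ref{prop:ceiling}. Since $\Lambda m_1>e\,\mu m_1^{+}$, we have $\bar f>1/\kappa$ and $\hat s^{*}=\bar f-1/\kappa>0$. With $k\ge\hat s^{*}$ the ceiling gives $g\le F(\hat s^{*})=R^{*}$.

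\emph{Strict inequality.} The bound is tight only if every step of the duality argument is tight. Three things would all have to hold: the fee is $\bar f$ almost surely (the unique maximizer of $\lambda(f)(f-\hat s^{*})$), replenishment is never refused, and no refused forward or refill occurs. Under a fee of $\bar f$ the balance has zero drift but positive variance $\nu_0$. It therefore hits $0$ or $c$ with positive long-run frequency. Each hit costs either a refused forward, refused replenishment, or a refill with $K>0$. This would be formalized by adding to the ceiling inequality the nonnegative slack terms that the proof of Proposition~\ref{prop:ceiling} drops.

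\emph{The static fee $\bar f$ without refills, one size law.} The process is a random walk on $[0,c]$ with equal up and down rates and symmetric jumps. Reflection is by refusal at both ends. I would show its stationary law is uniform on $[0,c]$. The uniform density is invariant because the balance equations for a symmetric walk with refusal at the boundaries are satisfied by a constant density; checking this is a short computation with the generator~\eqref{eq:gen}.

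Given the uniform law, revenue is $\lambda(\bar f)\,\mathbb E_b[\bar f\,m(b)-c_{\mathrm r}\bar G(b)]$. Note that $\int_0^c \mathbb P(\xi\le b)\,\xi\,db/c=\mathbb E[\xi(c-\xi)^+]/c$. Rewriting $m_1-\mathbb E[\xi(c-\xi)^+]/c=\mathbb E[\xi(\xi\wedge c)]/c$, and similarly $\int_0^c\bar G/c=\mathbb E[\xi\wedge c]/c$, yields exactly $R^{*}-\Gamma$. When $\xi\le c$ this becomes $\lambda(\bar f)(\bar f m_2+c_{\mathrm r}m_1)/c$. Using $\nu_0=2\lambda(\bar f)m_2$ under a shared size law gives the stated form.

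\emph{Remaining claims of~(i).} The gain bound is $g-g_{\text{static}}\le R^{*}-(R^{*}-\Gamma)=\Gamma$. For the slope, evaluate~\eqref{eq:hjb-diff} at $b^{*}$, where $h'=\hat s^{*}$. There $F(h')=R^{*}$, so $a:=-h''(b^{*})=2(R^{*}-g)/\nu_0$. Then $R^{*}-g\le\Gamma$ gives $ac\le\bar f+c_{\mathrm r}m_1/m_2$.

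For part~(ii), the first claim is flow conservation. Integrate~\eqref{eq:inv-dyn} over $[0,T]$ and divide by $T$. Because $b$ is bounded, served replenishment minus served demand plus refills averages to $0$. Under fee $f_0$, offered net drift is $d(f_0)$, so refusals or refills must absorb at least $|d(f_0)|$. In the diffusion regime with $|d|c\gg\nu_0$, the stationary law concentrates within $O(\nu_0/|d|)$ of a boundary. The loss is then $|d|(1+o(1))$, which I would read off the exponential stationary density of reflected Brownian motion with drift.

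For the affine skew, linearize $\lambda(f(b))$ around $\bar f$. The drift is $\mu m_1^{+}\kappa a(b-b_{\mathrm{eq}})$ to first order, which is an Ornstein--Uhlenbeck process with rate $\theta$ toward $b_{\mathrm{eq}}$. Its stationary density is Gaussian with variance $\nu_0/(2\theta)$. The boundary hitting rate is therefore $\exp(-\theta\min(b_{\mathrm{eq}},c-b_{\mathrm{eq}})^2/\nu_0)$ up to prefactors. I would bound the prefactor using the local time at the boundary of the reflected OU process.

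\textbf{Main obstacle.} I expect the hardest step to be making the strict inequality $g<R^{*}$ rigorous for an arbitrary optimal policy. This needs a quantitative slack in the ceiling argument, not just a heuristic about variance. My first attempt would be a compactness argument: near-ceiling policies must keep fees near $\bar f$, and then boundary hits occur with frequency bounded below.
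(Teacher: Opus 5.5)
Your route is the paper's: the ceiling of Proposition~\ref{prop:ceiling} for $g\le R^{*}$, the unique maximizer plus refusals for strictness, and the uniform law of the symmetric walk (the paper checks it by detailed balance with $\lambda(\bar f)=\mu$). The gain bound comes by subtraction, the slope from \eqref{eq:hjb-diff} at $b^{*}$, and (ii) from reflected Brownian motion and an Ornstein--Uhlenbeck approximation. Your strictness argument is more complete than the paper's, which invokes only the refusals of the constant fee $\bar f$ without refills.

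Your stated obstacle is mild when $\mathbb P(\xi>c)>0$, as for exponential sizes: the slack rate you add back is then bounded below at every balance and fee. For $f>\hat s^{*}$ it is at least $\Phi(\hat s^{*})-m_1\lambda(f)(f-\hat s^{*})+(f-\hat s^{*})\lambda(f)\,\mathbb E[\xi\mathbf 1\{\xi>c\}]$, which is positive and tends to $\Phi(\hat s^{*})$ at both ends. For $f\le\hat s^{*}$ it is at least $\Phi(\hat s^{*})$. So every policy earns at most $R^{*}-\epsilon$ for a fixed $\epsilon>0$, with no recurrence or compactness argument.

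One step is missing, however. You obtain $\Gamma=(\bar f+c_{\mathrm r}m_1/m_2)\,\nu_0/(2c)$ only for a shared size law with $\xi\le c$. The statement claims this form for any size laws in the diffusion regime, and your slope bound, stated without a shared law, needs exactly $g\ge R^{*}-(\bar f+c_{\mathrm r}m_1/m_2)\,\nu_0/(2c)$ there. With different laws the walk is not symmetric and the uniform law is not exactly invariant. The paper instead uses the driftless Brownian motion with variance rate $\nu_0$ reflected at $0$ and $c$. Its stationary law is uniform, and its local-time rate at $0$, the refused demand volume, is $\nu_0/(2c)$. That is the $d\to0$ limit of the rate $|d|/(1-e^{-2|d|c/\nu_0})$ you already use in (ii).

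In (ii) the linearization fails outside a range you do not state. Under exponential demand the exact drift under the affine skew is $\mu m_1^{+}\bigl(1-e^{\kappa a(b-b_{\mathrm{eq}})}\bigr)$, whose linearization is $-\theta(b-b_{\mathrm{eq}})$. Your $+\mu m_1^{+}\kappa a(b-b_{\mathrm{eq}})$ has the sign reversed and would push the balance away from $b_{\mathrm{eq}}$. Below $b_{\mathrm{eq}}$ the exact drift saturates at $\mu m_1^{+}$, so the stationary density at $0$ relative to $b_{\mathrm{eq}}$ is $\exp\bigl(-\tfrac{2}{\nu_0}\mu m_1^{+}\int_0^{b_{\mathrm{eq}}}(1-e^{-\kappa a u})\,du\bigr)$. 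Its exponent is at most your $\theta b_{\mathrm{eq}}^{2}/\nu_0$ and is close to it only while $\kappa a b_{\mathrm{eq}}\lesssim1$. Beyond that it grows only linearly, as $2\mu m_1^{+}b_{\mathrm{eq}}/\nu_0$. The Gaussian tail thus overstates the decay on the depletion side unless $\kappa a\min(b_{\mathrm{eq}},c-b_{\mathrm{eq}})\lesssim1$, the restriction the paper's proof attaches to the claim. Toward $c$ the exact drift is stronger than linear, so there your bound is conservative.
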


Against the \emph{best} static fee, state dependence adds little when payments
are small relative to capacity ($\nu_0/c$) and refusals cost little
($c_{\mathrm r}m_1/m_2$), as in revenue
management~\cite{gallego1994optimal,besbes2022static,banerjee2022pricing}; on six exactly solved instances the bound is $4\%$ of $R^{*}$ without $c_{\mathrm r}$ and $7$--$22\%$ with it, while the best constant fee loses $0.7$--$5.4\%$. The slack is of higher order in $\delta=\nu_0/(2c\mu m_1^{+})$: the best flat fee loses $\delta$ times a bracket, the optimum $\delta^{2}$ times one (App.~\ref{app:single}). What a skew does is part~(ii): a static
fee that misses $\bar f$ drains or starves the channel, while a skew turns the
level error into a shift of the operating balance, seen as fewer refills.

\paragraph{The anchored skew.} With the level held at $L$, a node can post
\begin{equation}
  f(b)=L+\hat s_L(b)-\hat s_L(\bar b),
  \label{eq:trim}
\end{equation}
where $\hat s_L$ is the shadow price under the flat fee $L$, computable from the channel's own flows at $L$, and $\bar b$ the channel's time-average balance. The anchor keeps
the fee at $L$ in the typical state: the skew cannot move the level. \S\ref{sec:sg-exact} checks it on the exact instances.

\subsection{Exact check}
\label{sec:sg-exact}

We solve the jump model exactly by relative value iteration ($400$ balance steps) on $144$ instances that vary price sensitivity, demand over replenishment, forward size, refill cost and refusal penalty (App.~\ref{app:single}). The bound of Proposition~\ref{prop:value}(i) applies to the $72$ that meet $0<\hat s^{*}\le k$; the computations below cover all $144$. The optimal fee is a level plus a skew (Fig.~\ref{fig:theory}(a); $\kappa=10$, demand five times replenishment, $K=0.05$), and the level carries the revenue (Fig.~\ref{fig:theory}(b)). Over the $144$ instances the best flat fee comes within $1.6\%$ of the optimum at the median and $9.6\%$ at most. A flat fee at half or twice the best level loses at least $17.6\%$ ($40.5\%$ at the median), at least $8$ times what any balance-dependent fee adds on the same channel (Fig.~\ref{fig:theory}(c); worst ratio $1/8.05$, median $1/25$). Below $e$ times replenishment the peak is smooth: revenue at $r$ times the peak level is $re^{1-r}$ of it. Above, the level sits at the kink $\bar f$ and errors cost first order in fee income, about $1-r$ below it, where the channel drains, and $1-re^{-\kappa\bar f(r-1)}$ above; on the grid a flat fee $20\%$ below or $25\%$ above the best one loses $9.5$ and $8.9\%$ of the optimum at the median, and $17.0$ and $21.5\%$ with demand eight times replenishment. Beyond exponential sizes and demand, on $2{,}592$ exact instances with lognormal or Pareto forward sizes and mixed or linear demand (App.~\ref{app:single}), the best flat fee stays within $1.1$--$2.3\%$ of the optimum at the median, and a level off by a factor of two loses $20.6$--$61.1\%$; only under the mixed demand does it lose less than the flat fee's gap, on $144$ of $864$ instances.

The anchored skew~\eqref{eq:trim} does not repair a wrong level. At the best flat fee it cuts the gap from $1.6$ to $0.5\%$ at the median, but at twice that fee it changes nothing at the median and at half it gains $0.3$ points. It pays only on the $28$ instances where the half level drains the channel into a net loss, cutting that net loss from $154$ to $80\%$ of the optimum at the median. It is insurance against a level set too low where refills are costly. Scaling its shadow price by $\beta=0.5$, $0.8$ or $1.25$ changes the loss at the best level by at most $1.1$ points from $\beta=1$ on ten instances (App.~\ref{app:single}). The exact solutions thus support both roles: the best flat fee is near the optimum, and a factor-two level error costs far more than any skew adds.

\paragraph{When does the skew pay?}\label{sec:regimes}
$\kappa$ sets the level's scale and leaves the ordering; forward size sets the skew's bound, and the ratio of demand to replenishment sets $\bar f$. What a
skew adds grows with the ratio of demand to replenishment, from $0.0$ to $3.2\%$
at the median, and with forward size, from $0.8$ to $2.7\%$; with a refusal
penalty it reaches $9.6\%$, without one $3.8\%$. The skew's insurance has a region of its own: at half the
best level it recovers most of the optimum only where refills are costly and
forwards small against capacity (Fig.~\ref{fig:theory}(d)); elsewhere it
recovers at most $18$ points or loses a few.

\section{The Level at a Hub}
\label{sec:hub}\label{sec:sg-hub}\label{sec:hub-prop}

In a hub the channels feed each other, so a channel's replenishment is no longer
exogenous; this section couples them.

\begin{figure}[t]
\centering
\includegraphics[width=\columnwidth]{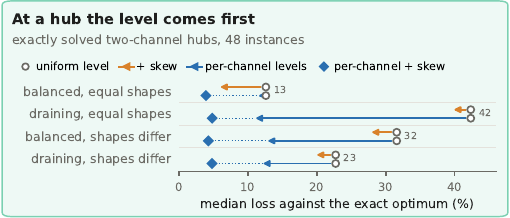}
\caption{At a hub the level comes first. Exactly solved two-channel hubs: the median loss against the exact optimum of the best uniform level, and where a skew on it (orange) or per-channel levels (blue; diamond: with each channel's own skew) take it, by flow balance and demand shape.}
\label{fig:hub}
\Description{For four kinds of two-channel hub, arrows from the uniform level's median loss: a skew on the uniform level moves it little where a channel drains or shapes differ, per-channel levels move it to 11 to 13 percent, and per-channel levels with skews to about 4 to 5 percent.}
\end{figure}

A forward enters a hub through one channel and leaves, and is priced, through
another: a channel's replenishment $\mu_i$ is exogenous to its own fee but
falls with the hub's other fees. Let channel $j$ face demand
$\lambda_j(f)=\Lambda_j\,\ell(f)$ with a common shape $\ell$, and let a share
$\pi_{ij}$ of the forwards leaving through $j$ enter through $i$, so
$\mu_i=\sum_{j\ne i}\pi_{ij}\lambda_j(f_j)$. We replace each channel's replenishment by its time
average under the posted fees, a mean-field decoupling that the exactly solved hubs below test.

\begin{proposition}[Hub level; mean-field, (ii) exact, (v) conjectured under coupling]
\label{prop:hub}
Let the hub post one level $L$ on every channel and a skew on each, with $\mu_i$ decoupled as above. Then
\emph{(i)} its average revenue is $\sum_i g_i(L;\mu_i(L))$ and, given $L$,
channel $i$'s best skew is that of Proposition~\ref{prop:fee-struct} at
$\mu_i(L)$;
\emph{(ii)} under flat fees at $L$, $\mu_i(L)/\lambda_i(L)$ does not depend on $L$: the level scales a
channel's inflow and outflow alike and leaves the sign of its drift, relative to
its flow, unchanged;
\emph{(iii)} where refills and refusals are rare at $L$,
$\sum_i g_i=L\,m_1\Lambda\,\ell(L)+o(1)$ with $\Lambda=\sum_i\Lambda_i$, so
the best level $L^{*}=\arg\max_L L\ell(L)$ is uniform and is each channel's own
best level, whatever the $\Lambda_i$, $\pi$ and $\mu_i$; for
$\ell(f)=e^{-\kappa f}$, $L^{*}=1/\kappa$ and a level $rL^{*}$ earns the
fraction $r\,e^{1-r}$ of the peak, $0.74$ at $r=2$ and $0.82$ at $r=1/2$;
\emph{(iv)} levels differ across channels only where refills bind, through $\bar f_i$ at $\mu_i(f_{-i})$ (channel $i$ drains when $\bar f_i>L^{*}$) (a fixed point exists, not necessarily unique, with fees capped where demand is negligible), or where the shapes differ;
\emph{(v)} the skews gain at most
$\sum_i(\bar f_i+c_{\mathrm r}m_1/m_2)\,\nu_{0,i}/(2c_i)$
(Proposition~\ref{prop:value}(i) per channel, under its conditions; with coupled
replenishment a conjecture, which we test).
\end{proposition}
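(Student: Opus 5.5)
The plan is to treat the mean-field hub as a collection of single channels, each with its replenishment frozen at its time average $\mu_i(L)$, and to prove the five parts in order, using the single-channel results wherever they apply.

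\emph{Part (i).} Once each $\mu_i$ is a constant, the hub's state process splits into independent channel processes, each of the form~\eqref{eq:inv-dyn}.
1. The long-run average revenue therefore splits into a sum of channel averages $g_i(L;\mu_i(L))$.
2. With $L$ fixed, the skews enter no other channel's $\mu_i$: under the decoupling $\mu_i$ depends on the posted levels only.
3. Hence the joint optimization over skews separates channel by channel. Each channel then solves the QVI of Proposition~\ref{prop:single-opt}, and its optimal skew is the one given by Proposition~\ref{prop:fee-struct} at $\mu_i(L)$.

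\emph{Part (ii).} This is exact and follows from substitution. Under flat fees $L$ we have $\mu_i(L)=\sum_{j\ne i}\pi_{ij}\Lambda_j\ell(L)$, so
\[
\frac{\mu_i(L)}{\lambda_i(L)}=\frac{\sum_{j\ne i}\pi_{ij}\Lambda_j}{\Lambda_i},
\]
which does not depend on $L$. Because $\ell(L)$ scales inflow and outflow alike, the sign of the drift $d=\mu_i m_1^{+}-m_1\lambda_i$, and its size relative to the flow, is fixed by the routing data alone.

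\emph{Part (iii).} If refills and refusals are rare at $L$, each $g_i$ equals fee income $L\,m_1\lambda_i(L)$ up to an $o(1)$ boundary term, the same term that Proposition~\ref{prop:value}(i) bounds by $\Gamma$.
1. Summing over channels gives $L\,m_1\Lambda\,\ell(L)+o(1)$.
2. Maximizing $L\ell(L)$ gives a uniform $L^{*}$ that does not depend on $\Lambda_i$, $\pi$ or $\mu_i$. It is also each channel's own best level, because each channel's income is $\Lambda_i$ times the same function of $L$.
3. For $\ell(f)=e^{-\kappa f}$, we get $L^{*}=1/\kappa$. Substituting $rL^{*}$ gives $r e^{1-r}$ of the peak, which evaluates to $0.74$ at $r=2$ and $0.82$ at $r=1/2$.

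\emph{Part (iv).} Where refills bind, I would argue channel by channel.
1. By the single-channel analysis, channel $i$'s level becomes $\max\{L^{*},\bar f_i\}$, with $\bar f_i$ evaluated at $\mu_i(f_{-i})$. Channel $i$ drains exactly when $\bar f_i>L^{*}$.
2. Levels can also differ when the demand shapes differ, since then the maximizers of $f\ell_i(f)$ differ.
3. Existence of a fixed point: cap the fees at some $\bar F$ beyond which demand is negligible. The map $f_{-i}\mapsto\max\{L^{*},\bar f_i(\mu_i(f_{-i}))\}$ is continuous on the compact box $[0,\bar F]^n$, because $\mu_i$ is continuous in the fees and $\bar f_i=\lambda^{-1}(\mu_i m_1^{+}/m_1)$ is continuous in $\mu_i$. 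Brouwer's theorem then gives a fixed point. Uniqueness is not claimed, and monotonicity can fail because raising $f_j$ lowers $\mu_i$ and so raises $\bar f_i$.

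\emph{Part (v).} This sums the per-channel bound of Proposition~\ref{prop:value}(i) over channels. Under the decoupling it is a theorem, valid when each channel meets that proposition's hypotheses. With coupled replenishment, a skew on channel $j$ alters the variance and timing of $\mu_i$, so the per-channel diffusion argument no longer holds directly. That part is stated only as a conjecture, to be checked on the exactly solved hubs.

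I expect the main obstacle to be the $o(1)$ claim in (iii). It requires showing that boundary losses are uniformly small in $L$ near $L^{*}$, which I would handle with the $\Gamma$ bound. After that, the hardest step is the continuity and capping needed for the fixed point in (iv).
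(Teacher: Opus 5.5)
Your parts (i)--(iii) and (v) follow the paper's proof, and (ii) and the computations in (iii) are the same as there. The real difference is the existence argument in (iv). You cap the fees and apply Brouwer to the continuous map $f\mapsto(\min\{\max\{L^{*},\bar f_i(\mu_i(f_{-i}))\},\bar F\})_i$ on $[0,\bar F]^n$. The paper instead notes that this map is nondecreasing and applies Tarski's theorem on the complete lattice $[0,f_{\max}]^n$.

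Your route works. $\lambda_i$ is a continuous, strictly decreasing bijection onto $(0,\lambda_i(0)]$, so $\bar f_i$ is continuous in $\mu_i$ once you set $\bar f_i=0$ when $m_1^{+}\mu_i/m_1\ge\lambda_i(0)$; you should state that case.

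Your closing remark, however, is wrong. ``Raising $f_j$ lowers $\mu_i$ and so raises $\bar f_i$'' is exactly the statement that the map is isotone. Since every $\pi_{ij}\ge0$ and every $\lambda_j$ decreases, monotonicity never fails in this model, and it is the property the paper's proof rests on.

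The two routes buy different things. Tarski uses only the order, so it would survive level maps that jump (as the optimal fee may under non-log-concave demand, where Proposition~\ref{prop:fee-struct} keeps the largest maximizer nondecreasing). It also yields a least and a greatest fixed point. Brouwer uses only continuity, so it would survive a non-monotone coupling, which this hub does not have.

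Two smaller points need repair.

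\emph{Part (i).} Your step 2 asserts that the skews enter no other channel's $\mu_i$. In general they do: channel $j$'s time-averaged outflow is $\Lambda_j\ell(L)\bar\ell_j$, where $\bar\ell_j$ is the stationary mean of $\ell(L+s_j(b_j))/\ell(L)$. The paper makes this factor explicit and sets $\bar\ell_j=1$ through the anchoring in \eqref{eq:trim}. Without that step, channel $j$'s skew shifts channel $i$'s replenishment, and the channel-by-channel separation in your step 3 does not follow.

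\emph{Part (iii).} The $o(1)$ is the hypothesis itself, as in the paper's $g_i=L\,m_1\lambda_i(L)(1-p_i)$ less refusal and refill costs. The $\Gamma$ of Proposition~\ref{prop:value}(i) is the loss of a channel held at flow balance. Your plan to control the boundary term through $\Gamma$ therefore covers only balanced channels. For those, by (ii) every $L$ is flow-balancing, and the uniform-law computation behind $\Gamma$, redone at $L$, gives a refused volume share $\mathbb E[\xi(\xi\wedge c)]/(m_1c)$ at every level.

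A draining channel is different. By Proposition~\ref{prop:value}(ii) it refuses or refills at rate at least $|d_i(L)|$, and by (ii) this is a share of its flow that does not depend on $L$. $\Gamma$ cannot bound that share; only the hypothesis of (iii) excludes such channels.
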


Part~(iv) gives the two cases for per-channel levels, binding refills and price
discrimination across channels; neither is a skew.

\paragraph{Exactly solved hubs.}
Where a channel drains or demand shapes differ, per-channel levels, not a skew
on a uniform level, recover what a uniform level loses (Fig.~\ref{fig:hub}).

\emph{Setup.} We
check this on hubs small enough to solve exactly: two or three channels and the
rest of the hub, a deep channel, exchange forwards, each moving its size from
the exit's balance to the entry's, so the state is the joint balance. Part of each channel's traffic goes to the rest of the hub, so total balance is not conserved. Channel~$1$ drains and
channel~$2$ fills, in a ratio no uniform level changes. Against the optimum (policy iteration) we evaluate the best uniform level,
per-channel levels and each with the anchored skew (App.~\ref{app:single},
Table~\ref{tab:hub}). These instances test the mechanism, not real hubs.

\emph{Balanced flows, equal shapes.} A uniform level is right: the channels' own best levels lie within $3\%$ of the uniform level and add at most $2.0$ points. The uniformity of (iii) thus holds although refills are not rare, but its $o(1)$ does not: state dependence is worth $11$--$16\%$ here, as each forward moves two balances and is $5$--$10\%$ of capacity.

\begin{figure*}[t]
\centering
\includegraphics[width=\textwidth]{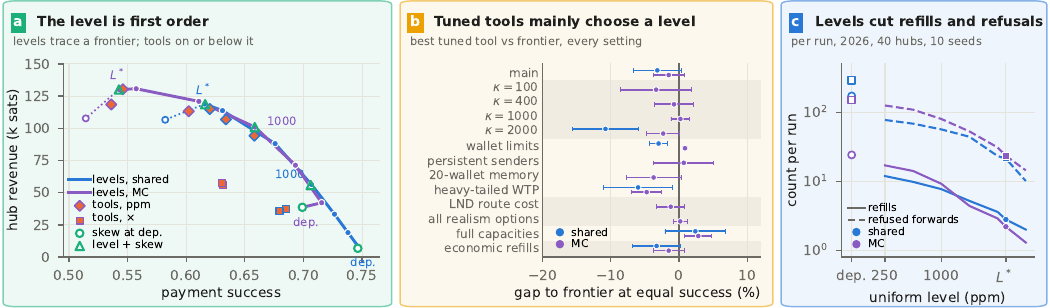}
\caption{Network evaluation on the 2026 snapshot, $40$ largest hubs: the level is first order, and tuned tools mainly choose a level. (a) Hub revenue against payment success: uniform levels (lines; hollow: above $L^{*}$), tuned tools (diamonds: ppm ranges; squares: multiples) and a skew at deployed fees or on a level (circles, triangles). (b) The best tuned tool against the frontier at its own success, in each of the $19$ settings and routing models ($95\%$ intervals; WTP: willingness to pay). (c) Refills (solid) and refused forwards (dashed) per run against the uniform level; hollow: deployed fees; filled: $L^{*}$.}
\label{fig:network}
\Description{Three panels. (a) Hub revenue against network payment success under two routing models: uniform levels trace a frontier, tuned tools sit on or below it, and level-plus-skew points coincide with the level points. (b) For each 2026 setting, the best tuned tool's revenue relative to the frontier, with intervals, mostly at or below zero. (c) Refills and refused forwards per run on log scales: deployed fees have the most, and both counts fall as the uniform level rises.}
\end{figure*}

\emph{A draining channel or different shapes.} The uniform level loses $22$--$42\%$ of the optimum at the median and per-channel levels cut that to $11$--$13\%$. A skew on the uniform level does not recover it (at worst the loss reaches $163\%$ of the optimum, a net loss), as (iv) says. On per-channel levels the skew adds a further $7$ points, more than on one channel, as the two balances move together. A factor-two error in the uniform level costs a further $18\%$ at the median (at least $10\%$), $1.5$ to $2.6$ times (minimum, median) what that skew adds.

\emph{Decoupling and the bound.} Per-channel levels with each channel's own skew come within $4.7\%$ of the optimum at the median, the cost of the decoupling in (i). Part~(v) applies to the $24$ two-channel and $4$ three-channel hubs with equal shapes: the optimum's gain over per-channel levels reaches $0.41$--$0.83$ of its bound on the $8$ two-channel ones that meet $k\ge\hat s^{*}$, exceeds it on $2$ of the $16$ that do not ($1.18$ and $1.51$), and all four three-channel ones hold. Three-channel hubs repeat these cases (App.~\ref{app:hubx}). At a hub too the level, set per channel where one drains, is the first-order choice; \S\ref{sec:trim} turns this into a rule.

\section{The \textsc{Trim} Rule: Choosing the Level}
\label{sec:trim}
\label{sec:sg-trim}

The level is first order and the skew second-order insurance (\S\ref{sec:single}--\S\ref{sec:hub}), so the rule makes the level an explicit, measured choice and the skew an exception; \S\ref{sec:eval} tests the claims behind it. The level cannot be computed from public data alone, as $\kappa$ and the senders' demand are not observed before a scan (given them, a model predicts it, \S\ref{sec:ev-peak}); it also sets how many payments the node carries.

\paragraph{Choosing the level.}
The operator first stops any other tool that writes the ppm field (\texttt{charge-lnd}, feeadjuster, LNDg's auto-fees, CLBOSS). It then posts one uniform level $L$, which moves the node along a revenue--success frontier (\S\ref{sec:eval}). Levels above the revenue peak $L^{*}$ earn no more at lower success, so they are dominated; senders' wallet fee limits cap the level where they bind (App.~\ref{app:x26}). The operator traces its own frontier by posting a few uniform levels, skew off, and picks a point by revenue, success and welfare. One choice is the highest level that keeps network payment success within a chosen margin: on our frontier $1{,}000$\,ppm costs $4$ points of network success. A \emph{large node}, one that can hold each level for tens of thousands of its own forwards, traces its frontier: comparing revenue, each level must be held long enough to tell it from the next, about $32$--$40$ thousand own forwards per level for levels $4\%$ apart~\cite{besbes2009dynamic}. Counting forwards is cheaper: probe $L/\sqrt2$ and $L\sqrt2$, estimate the local elasticity $\hat\kappa$ of the forward count and move to $1/\hat\kappa$, the first-order condition $L=1/\kappa(L)$, which needs a few hundred pooled forwards per probe. In our network it finds the peak when the hubs pool their forwards, but a lone hub's probes hold too few and it overshoots (\S\ref{sec:ev-choose}). A \emph{small node}, or any node searching alone, takes the level a model of its captive demand predicts (\S\ref{sec:ev-peak}), or a similar hub's, and keeps the skew off.

\paragraph{Adding the skew.} By default the rule posts no skew: with refills priced per amount it has no measurable net value in our network. Only where refills are frequent and costly at a level already chosen does it post the anchored skew~\eqref{eq:trim}.
Net of the fixed-trigger refills, the skew at deployed fees gains $34.2\pm6.3$k (shared memory) and $1.2\pm0.6$k (mission control) sat per run at $K=1{,}000$ (skew tables fitted for $K=1{,}000$), and $688$k and $22$k at $K=20{,}000$; these are upper bounds, since a node would not refill at that cost on a fixed trigger. With refills priced per amount it changes net revenue by at most $0.1$k (App.~\ref{app:skew}). Added to $1{,}000$\,ppm or
$L^{*}$ it changes net revenue by at most $0.6$k, within the noise of the gain above, at either cost: a measurable net gain needs refills on a fixed trigger at $K\ge1{,}000$\,sat.

\paragraph{Deployment.}
The level needs no node changes, only the ppm field; the skew needs a small program, written by the operator, that reads each channel's balance and posts the skew~\eqref{eq:trim}.

\begin{algbox}
\textbf{The \textsc{Trim} rule at a node.} \emph{Input:} local balances and forwards; the operator's weight on revenue, success and welfare.\\
\emph{Choose:} a large node traces the frontier by posting
a few uniform levels, each long enough in own forwards to tell them apart, and picks $L$; a small node takes a model's or a similar hub's level.\\
\emph{Run:} post $L$ on every channel. Only if refills remain frequent and costly,
fit each busy channel's flows, compute $\hat s_L$, and post
the skew~\eqref{eq:trim} whenever the balance moves by $5\%$ of capacity.\\
\emph{Guard:} stop other fee tools; probe within $[L/2,2L]$; to hold a chosen level, revert a move that loses more than twice the spread of the last eight probe pairs (it does not find one, \S\ref{sec:ev-choose}); skew only channels with enough traffic.
\end{algbox}

\paragraph{Scope.} The rule is for one hub choosing its own level: a hub repricing alone gains less than the $40$ together (App.~\ref{app:alone}), and $L^{*}$ is not an equilibrium, since with every hub at $L^{*}$ most gain by cutting (App.~\ref{app:level}). The peak is the revenue end of the frontier, not a recommendation: senders pay for it (\S\ref{sec:conclusion}).

\section{Network Evaluation}
\label{sec:eval}

In a network the level depends on senders and other hubs, so we test the results of \S\ref{sec:single}--\S\ref{sec:trim} by simulation, with every other node at its deployed fee. We stated predictions for the level, the skew and the tools before the runs; those for the skew and the tools are confirmatory (\S\ref{sec:ev-skew}, \S\ref{sec:ev-tools}).

\paragraph{Setup.}\label{sec:ev-case} The snapshot is the Lightning network of 4 August 2026 from public
gossip~\cite{rgs2024,lightning-rfc}, $5{,}507$ nodes and $27{,}063$ channels
with their posted fees. Topology, fees and the tools' rules are real; only
the payments are synthetic, as no public dataset carries forwarding traces. Each
run routes $12{,}000$ payments (mean $41$k sat) with capacities proxied by the
maximum HTLC size, scaled to $30\%$ so that depletion occurs within a run.
Senders accept a route with probability $\exp(-\kappa p/v)$, $\kappa=200$, for
path fee $p$ and amount $v$, take the cheapest path blind to balances, and retry
up to $10$ routes. They remember failures in one of two ways: with \emph{shared memory},
all senders avoid a failed channel for $300$ payments; \emph{mission
control} is kept per sender, as in LND; most senders pay once or twice, so it learns mostly within a payment's retries ($500$ persistent senders exercise it, Fig.~\ref{fig:network}(b)). Channels start balanced. The $40$ highest-degree hubs (the largest) run a candidate policy;
a hub channel below $5\%$ of capacity is refilled to $50\%$ at $K=100$ sats. We
run \texttt{charge-lnd}, feeadjuster and LNDg, implemented from their source and
tuned on calibration seeds over absolute ranges and over multiples of deployed
rates. We also run uniform levels from $250$ to $8{,}000$\,ppm, among them $L^{*}$, the revenue peak of a calibration scan ($4{,}900$\,ppm; App.~\ref{app:evalx}), and the anchored
skew~\eqref{eq:trim}.

\paragraph{Frontier and tests.} Results are on $10$ disjoint evaluation seeds with paired
$95\%$ intervals. The \emph{frontier} is the per-seed upper envelope of the revenue--success points of the deployed fees and the seven uniform levels run ($250$--$8{,}000$\,ppm; Table~\ref{tab:mainc} shows three), piecewise linear in success; the confirmatory tests add the tools' centers ($2{,}450$, $6{,}125$\,ppm) and $12{,}000$\,ppm; choosing the levels on other seeds moves the tools' gaps by at most $0.8$ points (App.~\ref{app:evalx}). Two claims on the main setting are confirmatory. Both use one-sided paired $t$-tests over the seeds, with a Holm correction over the $12$ tests (three tools and three skew comparisons per routing model; family-wise $\alpha=0.05$). First, no ppm-tuned tool beats the frontier by more than $3\%$ ($p\le0.009$ after Holm). Second, the skew adds less than $3\%$ under mission control and $5\%$ with shared memory, whose seed noise is larger. Everything else, including every robustness setting, is exploratory (App.~\ref{app:evfull}).

\subsection{The level is first order}
\label{sec:ev-results}

\begin{finding}
\FI{Under both routing models the level traces a revenue--success
frontier.}
\KN From deployed fees to $L^{*}$, hub revenue rises from $7.0$k to $118.9$k
with shared memory and from $38.6$k to $130.1$k under mission control, the $40$ hubs repricing together. At $1{,}000$\,ppm revenue is $2.6\times$ the deployed fees' under mission control and $8.1\times$ with shared memory.
\end{finding}

Each step of the level trades payment success for revenue
(Table~\ref{tab:mainc}, Fig.~\ref{fig:network}(a)) and cuts refills and refused forwards (Fig.~\ref{fig:network}(c)): at $L^{*}$ their counts per run fall by $84$--$98\%$. As
Proposition~\ref{prop:hub}(ii) predicts, uniform levels from $1{,}000$ to $8{,}000$\,ppm keep the
hubs' refills per thousand forwards at $3.1$--$3.9$ (shared memory) and
$1.6$--$2.3$ (mission control), against $12.6$ and $3.9$ under the deployed,
non-uniform fees. Much of the cut thus comes from posting one level on every channel, and the rest from carrying fewer forwards. Net of refills at $K=100$ the ordering is unchanged: refills cost $0.2$--$1.4\%$ of revenue on a level, and turn the deployed fees' net revenue negative with shared memory. Nor do levels that differ across hubs or channel classes beat the uniform one: per-hub levels found by coordinate ascent and the best split of draining and filling channels earn no more than $L^{*}$ on the evaluation seeds (App.~\ref{app:perhub}).

\paragraph{Where the peak sits.}\label{sec:ev-peak}
The peak is set by the senders' acceptance rule and by how much of the hubs'
demand has no way around them. We route each payment at deployed fees around all $40$ hubs. Then $35\%$ of the hubs' forwarded volume has no route around them, $99\%$ of it through one hub and $88\%$ because its sender or receiver reaches only hubs (App.~\ref{app:evalx}). Above a few hundred ppm the hubs thus earn from captive demand alone, $L\,e^{-\kappa L}$ per hub hop, which peaks at $1/\kappa=5{,}000$\,ppm (Proposition~\ref{prop:hub}(iii)). Predictions frozen before the runs held for the $10$, $20$ and $80$ largest hubs (App.~\ref{app:evalx}). Posted fees sit below $L^{*}$ on $97\%$ of channel directions, so real senders are likely more price-sensitive than ours, or competition binds. The peak's location and the size of the
multiples thus belong to the simulated senders; what we claim beyond them is
the ordering.

\begin{table}[t]
\centering
\caption{Policies on the 2026 snapshot under shared memory (S) and mission control (M): gross revenue (k sats per run), payment success, and gap to the frontier at the policy's success (\%, $95\%$ interval; negative: below it); tools tuned over ppm ranges or multiples ($\times$).}
\label{tab:mainc}
\Description{Table of gross revenue, payment success and gap to the frontier for deployed fees, tuned tools, the skew and uniform levels, under two routing models.}
\footnotesize
\setlength{\tabcolsep}{2.2pt}
\begin{tabular}{@{}lrrrrrr@{}}
\toprule
& \multicolumn{2}{c}{Gross} & \multicolumn{2}{c}{Success} & \multicolumn{2}{c}{Gap (\%)} \\
Policy & S & M & S & M & S & M \\
\midrule
deployed fee & $7.0$ & $38.6$ & $0.75$ & $0.70$ & $0$ & $-39.5{\pm}1.9$ \\
\midrule
LNDg, ppm & $115.0$ & $130.3$ & $0.62$ & $0.55$ & $-3.2{\pm}3.5$ & $-1.5{\pm}2.3$ \\
\texttt{charge-lnd}, ppm & $106.9$ & $118.5$ & $0.63$ & $0.54$ & $-4.2{\pm}4.0$ & $-10.5{\pm}2.8$ \\
feeadjuster, ppm & $94.2$ & $113.3$ & $0.66$ & $0.60$ & $-4.0{\pm}5.4$ & $-7.6{\pm}2.5$ \\
\midrule
\texttt{charge-lnd}, $\times$ & $37.1$ & $55.8$ & $0.68$ & $0.63$ & $-52.7{\pm}4.2$ & $-50.2{\pm}1.6$ \\
feeadjuster, $\times$ & $35.6$ & $57.7$ & $0.68$ & $0.63$ & $-57.5{\pm}3.4$ & $-48.8{\pm}2.0$ \\
\midrule
\ours skew at deployed & $6.8$ & $38.7$ & $0.75$ & $0.70$ & $-21.0{\pm}6.5$ & $-39.3{\pm}1.9$ \\
\ours level $1{,}000$ & $56.4$ & $100.9$ & $0.71$ & $0.66$ & -- & -- \\
\ours level $2{,}000$ & $88.1$ & $120.9$ & $0.68$ & $0.61$ & -- & -- \\
\ours level $L^{*}$ & $118.9$ & $130.1$ & $0.62$ & $0.54$ & -- & -- \\
\bottomrule
\end{tabular}
\end{table}

\subsection{The skew is second-order insurance}
\label{sec:ev-skew}

\begin{finding}
\FI{The skew changes gross revenue by at most a few percent, as Proposition~\ref{prop:value} allows here; it pays only where refills are frequent.}
\KN Alone at deployed fees or added to a level of $1{,}000$\,ppm or $L^{*}$,
the skew changes gross revenue within the noise under both routing models (below the confirmatory bounds); at deployed fees it cuts forced refills
by $21\%$ (shared memory) and $5\%$ (mission control), raising net revenue by $3.5\pm0.7$k and $0.2\pm0.1$k per run at $K=100$. On a level the hubs
refill only $2$--$9$ times per run, and the skew has little left to save.
\end{finding}

At deployed fees its $-21\%$ gap in Table~\ref{tab:mainc} comes from less success on the frontier's steep low end, not from less revenue. The skew has room to matter here: forwards are not small against the scaled capacities, as the size-biased mean forward $m_2/m_1$ is $6\%$ of capacity at the volume-weighted median hub channel, so Proposition~\ref{prop:value} allows a skew a few percent of revenue, as observed. At deployed fees, scaling the skew by $0.5$--$8\times$ keeps its refill cut at $19$--$23\%$ (shared memory) and $6$--$8\%$ (mission control) and gross revenue within $1.4\%$ of the deployed fees' ($5$ seeds; App.~\ref{app:amp}).

\subsection{Choosing the level explicitly}
\label{sec:ev-choose}

The forward-counting search of \S\ref{sec:trim} finds the level when the $40$ hubs pool their forwards: probing for $1{,}500$ payments each, it ends at $4{,}300$--$5{,}600$\,ppm (shared memory) and $3{,}300$--$4{,}000$\,ppm (mission control) from starts of $350$--$8{,}000$\,ppm, within $24{,}000$ payments. Per hub it overshoots ($8{,}700$--$10{,}500$\,ppm), and the probes cost $4$--$29\%$ of revenue against posting $L^{*}$ throughout (App.~\ref{app:search}). Probes sized in network payments do not identify the level: at $600$ payments a probe holds only $1$--$6$ forwards of the average hub (App.~\ref{app:evalx}), so searches started at $1{,}000$\,ppm, per hub or with one level for all, end $4$--$10\%$ below the frontier at their success, near where they start; when all hubs search at once, the end depends on the start. A guard that undoes losing moves holds a level but does not find one: from $350$\,ppm it reaches only $400$--$450$\,ppm and earns $16$--$27\%$ less than the unguarded search.

\subsection{Tuned tools mainly choose a level}
\label{sec:ev-tools}

\begin{finding}
\FI{On the 2026 snapshot no tool lies more than a few percent above the frontier.}
\KN Centered near $L^{*}$, the tools come within $1.5$--$10.5\%$ of the
frontier at equal success and earn no more than the flat level at their
center, within the noise.
\end{finding}

Tuned over absolute ranges, each tool's best range posts
$2{,}450$--$6{,}125$\,ppm at a balanced channel: tuning a tool chooses a level. On \texttt{charge-lnd}'s calibration grid (six centers by three spreads), the center moves revenue up to $3.1\times$ and $1.5\times$ at a fixed spread. At the same effective level, the other spreads earn at most $1.8\%$ more and up to $46\%$ less (App.~\ref{app:evalx}).
LNDg, near-flat, earns $3.2\%$ (shared memory) and $1.5\%$ (mission control) less than the frontier at its own success, inside the noise; \texttt{charge-lnd} and feeadjuster, which
spread the fee over the balance, earn $4.0$--$10.5\%$ less. Tuned as multiples of deployed fees,
which keep the deployed spread, the tools lie $48.8$--$57.5\%$ below the frontier; both picks sit on the grid's top edge. Mechanisms that price imbalance, \textsc{Fence}~\cite{wang2024fence}, Di Stasi et al.~\cite{distasi2018routing} and the Merchant~\cite{vanengelshoven2021merchant}, calibrated to the deployed rate, earn less than a uniform level on both the 2026 and the 2019 snapshot (App.~\ref{app:2019}). Tuned instead to the success of $1{,}000$\,ppm, the tools lie $8.1$--$10.0\%$ below the frontier with shared memory and $-4.3$ to $+2.9\%$ off it under mission control (App.~\ref{app:succ}).

\begin{table}[t]
\centering
\caption{From the exact model to the network, per parameter. Exact: single-channel medians; skew adds: best balance-dependent over best flat fee; level error: flat fee at half or twice the best level. MC: mission control. $^{\dagger}$Upper bound: best level on the evaluation seeds.}
\label{tab:bridge}
\Description{Table mapping each model parameter to its effect in the exact instances and in the network settings.}
\footnotesize
\setlength{\tabcolsep}{2pt}\renewcommand{\arraystretch}{0.95}
\begin{tabular}{@{}>{\raggedright\arraybackslash}p{0.085\textwidth}>{\raggedright\arraybackslash}p{0.17\textwidth}>{\raggedright\arraybackslash}p{0.19\textwidth}@{}}
\toprule
Parameter & Exact model & Network \\
\midrule
$\kappa$ & skew adds $1.0$--$2.2\%$ ($\kappa=5$--$20$) & $\kappa=100$--$2{,}000$: $L^{*}$ from $8{,}000$ to $500$\,ppm, as $1/\kappa$ \\
size / capacity & skew adds $0.8\to2.7\%$ (size $0.02\to0.2$ of capacity) & capacity $15$--$100\%$, payments $5\times$: multiples $3.3$--$20.9\times^{\dagger}$; tools at most $2.9\%$ above at full capacities \\
demand / replenishment & skew adds $0.0\to3.2\%$; level error $22\to50\%$ (medians, ratio $1.5\to8$) & uniform levels: $3.1$--$3.9$ refills per $1{,}000$ forwards, deployed fees $12.6$ \\
refill cost & skew adds $1.3\to2.2\%$; insures a low level only where refills are costly & $K=100$, $1{,}000$, $20{,}000$, economic refills: the skew's net gain scales with $K$ \\
hub coupling & draining hubs: per-channel levels cut the loss from $22$--$42$ to $11$--$13\%$, a skew on the uniform level does not & per-hub or per-class levels: no gain ($-7.8\%$ per hub, MC) \\
senders, routing & -- & two memory models (multiples $3.4$ vs $17.1\times$), memory pooled by $1$--$100$ wallets ($3.2$--$3.5\times$), wallet limits, persistent senders, heavier tails; 2019 balance-aware senders: tools $+14$--$24\%$ \\
\bottomrule
\end{tabular}
\end{table}

\paragraph{Across settings.}\label{sec:ev-robust} We repeat the test in the $19$ settings of Fig.~\ref{fig:network}(b) (Table~\ref{tab:bridge}) and on the 2019 snapshot. The ordering holds in every 2026 setting: in every row of Fig.~\ref{fig:network}(b) the best tool's estimate is at most $+2.9\%$. Its interval excludes $0$ in three cases, all under mission control: wallet fee limits ($+0.9\pm0.4\%$), LNDg at full capacities ($+2.8\pm2.0\%$; App.~\ref{app:x26}) and, outside Fig.~\ref{fig:network}(b), the success-constrained picks ($+2.7$--$2.9\%$). Tools win clearly only on 2019, with balance-aware senders, and not through a skew (\S\ref{sec:conclusion}).

\section{Limitations and Conclusion}
\label{sec:conclusion}\label{sec:discussion}

\paragraph{Limitations.} \emph{Traffic}: payments are synthetic, acceptance is exponential and demand is stationary. The peak's location reflects the simulated senders (\S\ref{sec:ev-peak}); the ordering holds across settings (Table~\ref{tab:bridge}).
\emph{Routing}: senders take the cheapest single path, blind to balances except through mission control. Scaled capacities put network success ($0.70$--$0.75$ at deployed fees) below
the $99.7\%$ a large node reports~\cite{river2023}, and closer to full capacities tools edge above the frontier (Fig.~\ref{fig:network}(b)).
The frontier is an empirical envelope of uniform levels, not the optimum over all policies. The tools are our reimplementations of their published rules, tuned on our seeds; on a live node's traffic they could do better. \emph{Model}: with an inbound fee the skew adds more, $3.8$--$8.3\%$, still below what a $20$--$25\%$ level error costs (App.~\ref{app:single}); multi-part payments would shrink forwards, tightening the bound of \S\ref{sec:sg-value}.
\emph{The 2019 exception}: on the 2019 snapshot the level still earns $3.6$--$15.5\times$ the deployed fees' revenue (App.~\ref{app:2019}), but with $\kappa=100$ and senders that see balances, tuned
\texttt{charge-lnd} earns $14$--$24\%$ more than the frontier at its success. So do the deployed fees at their best uniform multiple on the calibration seeds ($12\times$), which have no balance term and earn within $0.9\%$ of it: the gain is price discrimination across channels, the case of Proposition~\ref{prop:hub}(iv), not a skew; even there a skew on a uniform level changes revenue by $-1.7$ to $+2.6\%$.

\paragraph{Welfare.} Senders pay: from deployed fees, $L^{*}$ lowers total surplus by $7.3\pm1.3\%$ and $6.7\pm0.9\%$, against $2.0\pm0.6\%$ and $0.5\pm0.3\%$ at $1{,}000$\,ppm; at $L^{*}$ the surplus lost is $0.78$ and $0.99$ of the hubs' net gain. Counting a refused sender as a total loss, these are upper bounds (App.~\ref{app:welfare}).

\paragraph{Conclusion.} A routing node's level carries most of its revenue and should be chosen on a revenue--success frontier; the skew's value lies in fewer refills, where refills are costly. Tuned tools beat a flat level at equal success by at most a few percent, except with 2019's balance-aware senders. How real senders respond, and where levels settle when all hubs reprice, remain open.

\noindent\textbf{Use of generative AI.} An LLM assistant helped write code and edit text; the authors take full responsibility.

\clearpage
\bibliographystyle{ACM-Reference-Format}
\bibliography{trim}

\clearpage
\appendix

\section{Proofs and Exact Solutions}
\label{app:single}

\subsection{Proofs for Section~\ref{sec:single}}
We prove the results of \S\ref{sec:single} in their order. One more result is needed first: the diffusion fee curve and its slope at flow balance, which \S\ref{sec:sg-closed} cites and Fig.~\ref{fig:theory}(a) compares with the exact optimum.

\begin{proposition}[Fee curve and skew slope; diffusion]
\label{prop:skew}
In the diffusion regime with constant variance rate $\nu_0$ and exponential
demand, let $g$ be the optimal average reward, taken from the exact jump model, and assume $g<F_{\min}:=\min_{s\ge0}F(s)$, which is $R^{*}$ if
$\Lambda m_1\ge e\,\mu m_1^{+}$ and $\Phi(0)$ otherwise (Proposition~\ref{prop:ceiling} gives
$g\le R^{*}$ when $k\ge\hat s^{*}$). Then:
\emph{(i)} (cf.~\cite[eqs.~(20)--(22)]{ata2005drift}) above the refill trigger the optimal shadow price
solves
\begin{equation}
  \hat s'(b)=-\tfrac{2}{\nu_0}\big(F(\hat s(b))-g\big),\qquad \hat s(c)=0,
  \label{eq:ode}
\end{equation}
and the optimal
fee $f^{*}(b)=\hat s(b)+1/\kappa$ is strictly decreasing in $b$;
\emph{(ii)} if $\Lambda m_1>e\,\mu m_1^{+}$ and the curve reaches $\hat s^{*}$,
then at the flow-balance point $b^{*}$, where $\hat s(b^{*})=\hat s^{*}$ and the
fee is $\bar f$, the skew slope is
\begin{equation}
  a:=-\hat s'(b^{*})=\frac{2\,(R^{*}-g)}{\nu_0},
  \label{eq:slope}
\end{equation}
and the skew is at least this steep everywhere.
\end{proposition}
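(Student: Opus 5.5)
Starting from the diffusion optimality equation~\eqref{eq:hjb-diff}, take the supremum over the fee explicitly. In the continuation region above the refill trigger, the exponential case of Proposition~\ref{prop:fee-struct} gives $f^{*}=\hat s+1/\kappa$ with $\hat s=h'$ to leading order in $\varepsilon$. Substituting, the fee-dependent part of~\eqref{eq:hjb-diff} becomes $\Phi(h')+\mu m_1^{+}h'=F(h')$, so the equation reads $g=F(\hat s)+\tfrac{\nu_0}{2}\hat s'$. Rearranging gives the ODE~\eqref{eq:ode}. The boundary condition $\hat s(c)=0$ is $h'(c)=0$: refused replenishment at a full channel is costless, as stated after~\eqref{eq:hjb-diff}. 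This matches the drift-control structure of~\cite{ata2005drift}, and we cite it only for the form of the reduced equation.

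For strict monotonicity in (i), use the hypothesis $g<F_{\min}=\min_{s\ge0}F(s)$. Since $\hat s\ge0$ (the shadow price is a marginal value of a bounded-nonnegative resource; in the diffusion limit it is $h'$, and one checks it cannot go negative because $\hat s'<0$ with $\hat s(c)=0$ forces $\hat s>0$ to the left of $c$), we have $F(\hat s(b))\ge F_{\min}>g$. Hence $\hat s'(b)<0$ on the whole region, and $f^{*}=\hat s+1/\kappa$ is strictly decreasing in $b$.

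To identify $F_{\min}$, use the convexity of $F$ noted before Proposition~\ref{prop:ceiling}. For exponential demand $F'(s)=-m_1\Lambda e^{-1}e^{-\kappa s}+\mu m_1^{+}$, which vanishes at $\hat s^{*}=\bar f-1/\kappa$. This point is $\ge0$ exactly when $\Lambda m_1\ge e\,\mu m_1^{+}$, and there $F=R^{*}$. Otherwise $F$ is increasing on $[0,\infty)$ and $F_{\min}=\Phi(0)$.

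Part (ii) is immediate once the curve passes through $\hat s^{*}$. At $b^{*}$, \eqref{eq:ode} gives $-\hat s'(b^{*})=\tfrac2{\nu_0}(F(\hat s^{*})-g)=\tfrac2{\nu_0}(R^{*}-g)$, which is~\eqref{eq:slope}. The fee there is $\hat s^{*}+1/\kappa=\bar f$. For the claim that the skew is at least this steep everywhere, note that $-\hat s'(b)=\tfrac2{\nu_0}(F(\hat s(b))-g)$ and $F(\hat s(b))\ge\min_{s\ge0}F=F(\hat s^{*})=R^{*}$, so $-\hat s'(b)\ge a$ for every $b$.

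The main obstacle is justifying the passage from the exact jump QVI (Proposition~\ref{prop:single-opt}) to the ODE: the $O(\varepsilon)$ replacement $\hat s=h'$, the constancy of $\nu_0$, and the existence of a $C^{2}$ bias $h$ in the average-reward limit. I would not prove convergence. Instead I would state (i) as holding for the diffusion equation taken as given, with $g$ imported from the exact model, as the statement does. The only consistency check I would add is that a solution of~\eqref{eq:ode} started at $\hat s(c)=0$ exists and is unique, since $F$ is smooth and locally Lipschitz, and stays positive going leftward, which the sign argument above already secures.
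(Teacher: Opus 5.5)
Your proposal is correct and follows the paper's proof essentially step for step. You rewrite \eqref{eq:hjb-diff} as $g=F(\hat s)+\tfrac{\nu_0}{2}\hat s'$ with $\hat s(c)=0$, get strict decrease from $F\ge F_{\min}>g$, locate the minimizer $\hat s^{*}=\bar f-1/\kappa$ of the convex $F$ with value $R^{*}$, and read off \eqref{eq:slope} and $-\hat s'(b)\ge a$ from $F(\hat s(b))\ge R^{*}$. The paper additionally sketches the Taylor expansion behind $\hat s=h'+O(\varepsilon)$, which is equally formal and which you take as given.

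Your positivity check on $\hat s$ reads as circular: $\hat s\ge0$ gives $\hat s'<0$, which gives $\hat s\ge0$. It becomes a sound continuation argument once you run it leftward from $\hat s(c)=0$ and $\hat s'(c)=-\tfrac{2}{\nu_0}(\Phi(0)-g)<0$; the paper leaves this step implicit. It is needed only when $\Lambda m_1<e\,\mu m_1^{+}$, since otherwise $\hat s^{*}\ge0$ minimizes $F$ over all of $\mathbb R$.
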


\begin{proof}[Proof of Proposition~\ref{prop:single-opt}]
Between events the balance does not move, and every fee gives total event rate
$\lambda(f)+\mu\le q:=\lambda(0)+\mu$ (Assumption~\ref{ass:demand}). Uniformizing at
rate $q$, the continuation operator
$(T w)(b)=\sup_{f\ge0}\big[(\mathcal L^{f}w)(b)+q\,w(b)\big]/(\rho+q)$ has bounded
rewards ($f\lambda(f)$ is continuous and vanishes as $f\to\infty$; $\xi\le c$; $\sup\emptyset=-\infty$ in $\mathcal M$ at $b=c$) and is a contraction with modulus
$q/(\rho+q)$ on bounded functions. Two consecutive refills cost at least as much as
one of the same total size ($K>0$, linear $k$), so refilling twice in a row is never
optimal, and $V$ is a fixed point of $w\mapsto\max\{Tw,\mathcal MTw\}$. $\mathcal M$
is nonexpansive (a supremum of translates less constants), so this map is again a
contraction with modulus $q/(\rho+q)$, and its unique bounded fixed point is $V$.
Rearranged, the fixed-point equation is \eqref{eq:qvi}, and any bounded solution of
\eqref{eq:qvi} is such a fixed point~\cite{davis1993markov,oksendal2019impulse}.
\end{proof}

\begin{proof}[Proof of Proposition~\ref{prop:fee-struct}]
The $f$-dependent term of \eqref{eq:gen} is
$u(f)=\lambda(f)[f\,m(b)+D(b)]$ with
$D(b)=\int_0^b(V(b-\xi)-V(b))G(d\xi)-c_{\mathrm r}\bar G(b)=-\hat s(b)m(b)$.
So $u=m(b)\lambda(f)(f-\hat s(b))$ is continuous, positive exactly for $f>\hat s(b)$ (or everywhere if $\hat s(b)<0$) and vanishes
as $f\to\infty$ (Assumption~\ref{ass:demand}): a maximizer exists, and at a positive one $u'(f)=0$; dividing by
$-\lambda'(f)m(b)>0$ gives $f-\hat s(b)=\psi(f)$. Where $u>0$,
$\log u=\log m(b)+\log\lambda(f)+\log(f-s)$ with $s=\hat s(b)$ has
$\partial^2/\partial f\partial s=(f-s)^{-2}>0$. If $s<s'$, $f$ maximizes at $s$, $f'$ at $s'$ and $f'<f$, then
$f>f'>s'$ and $\log u(f,s')-\log u(f',s')>\log u(f,s)-\log u(f',s)\ge0$, so $f'$ is not a maximizer: every maximizer at $s'$
is at least every maximizer at $s$, and the largest is nondecreasing. If $f-\psi(f)$ is strictly increasing the root is unique.
For exponential demand $\psi\equiv1/\kappa$.
\end{proof}

\begin{lemma}[Monotone, concave value]
\label{lem:concave}
In the diffusion regime of \S\ref{sec:sg-closed}, $V$ is nondecreasing and
concave and $\hat s$ is nonincreasing, strictly under the average criterion
(Proposition~\ref{prop:skew}).
\end{lemma}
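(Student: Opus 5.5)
I would prove Lemma~\ref{lem:concave} in the diffusion regime by working with the reduced optimality equation rather than the jump model. With $\varepsilon\ll1$ the shadow price \eqref{eq:shadow} equals $V'$ up to $O(\varepsilon)$. So $\hat s$ nonincreasing is the same statement as $V$ concave, and $V$ nondecreasing is the same as $\hat s\ge0$. For the discounted problem the diffusion equation above the refill trigger reads
\begin{equation*}
\rho V=\Phi(V')+\mu m_1^{+}V'+\tfrac{\nu_0}{2}V''.
\end{equation*}
The boundary condition at the full channel is $V'(c)=0$, since refused replenishment is costless. At the trigger $\underline b$, the refill condition gives $V(\underline b)=V(S)-K-k(S-\underline b)$, and optimality of the target gives $V'(S)=k$. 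The first two claims then reduce to one statement: $p:=V'$ is nonnegative and nonincreasing on the continuation region.

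\emph{Monotonicity.} I would argue this with a pathwise coupling in the exact model, so it does not depend on the approximation. Start two copies from $b\le b'$ and apply the policy that is optimal from $b$ to both. The higher copy serves every forward the lower one serves, since a forward feasible at $b$ is feasible at $b'$. It refuses no more demand, and it loses at most replenishment, which is costless. Its refills can be copied until the two balances merge, and copying costs nothing extra because the per-unit cost $k$ is paid only on the amount actually added. So $V(b')\ge V(b)$, and $\hat s\ge0$ follows once $c_{\mathrm r}\ge0$ is included.

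\emph{Concavity.} Differentiating the equation in $b$, and using the envelope property $\Phi'(s)=-m_1\lambda(f^{*}(s))$, gives a linear second-order equation for $p$:
\begin{equation*}
\tfrac{\nu_0}{2}p''+\big(\mu m_1^{+}-m_1\lambda(p+1/\kappa)\big)p'-\rho p=0.
\end{equation*}
I would apply a maximum principle to $q:=p'$. At an interior positive maximum of $q$ we have $q'=0$ and $q''\le0$, and the differentiated equation then gives $\rho q\le0$, which is a contradiction. This leaves the endpoints. At $b=c$ the condition $p(c)=0$ with $p\ge0$ to the left forces $q(c)\le0$. At the trigger, I would use that the refill region is an interval $[0,\underline b]$ on which $V$ is affine with slope $k$, together with the smooth-fit and optimality conditions, to get $q(\underline b^{+})\le0$. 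Hence $q\le0$ throughout.

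\emph{Average criterion.} I would pass to $\rho\to0$ using the assumed limit. In that limit the differentiated equation is exactly the ODE \eqref{eq:ode} of Proposition~\ref{prop:skew}, $\hat s'=-\tfrac{2}{\nu_0}(F(\hat s)-g)$. Under $g<F_{\min}\le F(\hat s)$ the right side is strictly negative, so $\hat s$ is strictly decreasing there, which gives the strictness claim. The step I expect to be the main obstacle is the endpoint sign at the refill trigger in the discounted case. Smooth fit and the $\mathcal M$-optimality conditions have to combine correctly there: I would try to show that $V'\ge k$ just above $\underline b$ and that $V'$ meets $k$ at $S$, which would force $p$ to be nonincreasing across the trigger. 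If that fails, I would fall back on taking the average-criterion ODE as primary and deducing concavity of the bias directly from it.
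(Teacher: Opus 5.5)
\textbf{Gap in the concavity step.} The maximum-principle argument on $q=p'$ does not go through. The equation you derive for $p=V'$ is not linear: its drift coefficient $\mu m_1^{+}-m_1\lambda(p+1/\kappa)$ depends on $p$. Differentiating once more to get an equation for $q$ therefore produces an extra term:
\[
\rho q=\tfrac{\nu_0}{2}q''+\big(\mu m_1^{+}-m_1\lambda(p+1/\kappa)\big)q'+\kappa m_1\lambda(p+1/\kappa)\,q^{2}.
\]
At an interior positive maximum of $q$ this gives only $\rho q\le\kappa m_1\lambda(p+1/\kappa)\,q^{2}$, which is no contradiction.

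The paper instead applies the maximum principle to $p$ itself, where $p'=0$ removes the state-dependent coefficient:
\begin{itemize}
\item At an interior minimum, $\rho p=\tfrac{\nu_0}{2}p''\ge0$. With $p\ge0$ at both ends, this gives $p\ge0$.
\item At an interior local maximum, $\rho p\le0$, so $p=0$ there.
\item If $p'(b_0)>0$ somewhere, then, since $p(b_0)\ge0$ and $p(c)=0$, $p$ attains its maximum over $[b_0,c]$ at an interior point with a positive value. This is a contradiction, so $p'\le0$.
\end{itemize}
This uses only $p\ge0$ and $p(c)=0$, so the trigger endpoint you call the main obstacle never enters.

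Your plan for that endpoint would not work anyway. With $V'=k$ on the refill interval, $V'\ge k$ just above $\underline b$ means $p$ does not decrease across the trigger. Smooth fit $V'(\underline b^{+})=k$, together with $V'(S)=k$ and $p$ nonincreasing, would force $V'\equiv k$ on $[\underline b,S]$. That contradicts $V(S)-V(\underline b)=K+k(S-\underline b)$ with $K>0$. Do not try to carry concavity across the trigger: the paper's own exact solutions show a kink there, and its argument concerns the region above it.

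\textbf{The monotonicity coupling.} This does not hold where you place it. In the jump model the two copies need not stay ordered:
\begin{itemize}
\item A demand forward with $b<\xi\le b'$ is refused by the lower copy but must be served by the higher one (Assumption~\ref{ass:depletion}(iii)). This can leave the higher copy at $b'-\xi<b$.
\item A replenishment with $c-b'<\eta\le c-b$ can lift the lower copy above the higher one.
\end{itemize}
After a crossing, ``refuses no more demand'' no longer holds, so you do not obtain $V(b')\ge V(b)$ for the exact model. With continuous paths the copies meet before they can cross, so the coupling is valid in the diffusion regime, which is where the lemma lives. That is also how the paper uses it, only to justify $s(0)\ge0$ without refills. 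Alternatively, use the minimum principle above.

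\textbf{Average criterion.} Your final step matches the paper: strictness is Proposition~\ref{prop:skew}(i) under $g<F_{\min}$. One correction: the $\rho\to0$ limit of the differentiated equation is not \eqref{eq:ode} itself but integrates to it, with constant $g$.
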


\begin{corollary}[Reservation-fee skew]
\label{cor:skew-mono}
Wherever $\hat s$ is nonincreasing in the balance, so is $f^{*}$
(Proposition~\ref{prop:fee-struct}): the node charges more on a depleted channel
and less on a flush one.
\end{corollary}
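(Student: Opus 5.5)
The plan is to obtain the corollary as a composition of two monotone maps, with no new analysis. By Proposition~\ref{prop:fee-struct}, in the continuation region the optimal fee at balance $b$ is the largest maximizer over $f\ge0$ of $\lambda(f)\,(f-s)$ evaluated at $s=\hat s(b)$. It therefore depends on $b$ only through $\hat s(b)$. So I will write $f^{*}(b)=\phi(\hat s(b))$, where $\phi(s)$ is the largest maximizer of $\lambda(f)(f-s)$. The proof of Proposition~\ref{prop:fee-struct} already shows that $\phi$ is nondecreasing in $s$. It does this by a single-crossing argument on $\log u(f,s)$, whose cross-partial $(f-s)^{-2}$ is positive.

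The key steps, in order, are these. First, take balances $b<b'$ in a set where $\hat s$ is nonincreasing, so that $s':=\hat s(b')\le s:=\hat s(b)$. Second, apply the monotone comparative statics of Proposition~\ref{prop:fee-struct} with the roles of $s$ and $s'$ exchanged. Every maximizer at the larger shadow price $s$ is at least every maximizer at the smaller one $s'$, and in particular $\phi(s)\ge\phi(s')$. This gives $f^{*}(b)\ge f^{*}(b')$, so the fee is nonincreasing in the balance: the node charges more when the channel is depleted and less when it is flush. Third, for exponential demand I will note the explicit form $f^{*}=\hat s+1/\kappa$, which makes the monotonicity immediate and shows that the slope of $f^{*}$ equals that of $\hat s$. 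Combined with Lemma~\ref{lem:concave}, this yields the diffusion-regime instance of the corollary.

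The only step needing care is the boundary case, where the single-crossing argument in Proposition~\ref{prop:fee-struct} was stated on the region $u>0$, i.e.\ $f>s$. Two situations need checking. If $\hat s(b)<0$, then $u>0$ for every $f\ge0$ and the argument applies unchanged. If instead no positive maximizer exists and the supremum sits at $f=0$, I will check directly that $\phi$ can only move upward as $s$ increases, since raising $s$ lowers $u(f,s)$ least at large $f$. So a corner at $0$ for the larger $s$ forces a corner at $0$ for the smaller $s$ as well. Finally, the refill region is excluded by hypothesis, since the corollary concerns the continuation fee wherever $\hat s$ is monotone. I expect this boundary bookkeeping to be the only real obstacle; the rest is the composition argument above.
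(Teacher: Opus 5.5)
Your proposal is correct and is the paper's own argument: the corollary is the composition of the nondecreasing map $s\mapsto$ (largest maximizer of $\lambda(f)(f-s)$) from Proposition~\ref{prop:fee-struct} with a nonincreasing $\hat s$. Your boundary bookkeeping is harmless but not needed, because a maximizer at $f=0$ forces $s<0$, where the paper's single-crossing argument already applies; in any case $m(b)\lambda(f)(f-s)$ has cross-partial $-m(b)\lambda'(f)>0$ everywhere.
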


\begin{proof}[Proof of Lemma~\ref{lem:concave}]
In the diffusion regime \eqref{eq:hjb-diff} with discounting,
$s=V'$ solves $\rho s=[\Phi'(s)+\mu m_1^{+}]s'+\tfrac{\nu_0}{2}s''$ with
$s=k\ge0$ at the refill boundary (or, without refills, $s(0)\ge0$: $V$ is nondecreasing, since a copy started with more balance, under the same fees, refuses no more forwards until the two meet) and $s(c)=0$.
\emph{Monotonicity.} At an interior minimum, $s'=0$ and $s''\ge0$ force
$\rho s\ge0$; with the boundary values, $s\ge0$.
\emph{Concavity.} At an interior local maximum $s''\le0$ forces $s\le0$, so
$s=0$ there. If $s'>0$ somewhere, then, since $s\ge0$ and $s(c)=0$, $s$ has a
later interior local maximum with a positive value, a contradiction; hence
$s'\le0$ and $V$ is concave; in the diffusion regime $\hat s$ is identified with $s$, so it is nonincreasing.
Strict decrease under the average criterion is Proposition~\ref{prop:skew}(i), under its hypotheses.
\end{proof}

\begin{proof}[Proof of Proposition~\ref{prop:ceiling}]
Fix $s\in[0,k]$. By the definition of $\Phi$, $m_1\lambda(f)(f-s)\le\Phi(s)$ for every
fee $f$. Under any policy let $Q$ and $U$ be the long-run average rates (along horizons attaining the $\liminf$ in $g$) of served
demand volume and of refilled volume; the balance being bounded, $Q\le\mu m_1^{+}+U$.
At fee $f$ volume is served at a rate $q\le m_1\lambda(f)$, and
$fq=(f-s)q+sq\le\Phi(s)+sq$ (the first term is negative if $f<s$, and $\Phi\ge0$), so fee
income is at most $\Phi(s)+sQ$. Refills cost at least $kU$ and refusal penalties are
nonnegative, so $g\le\Phi(s)+s\,\mu m_1^{+}+(s-k)U\le F(s)$. $F$ is convex, $\Phi$ being a supremum of functions affine in $s$.
Since $\Phi(s)\ge m_1\lambda(\bar f)(\bar f-s)$, $F(s)\ge R^{*}$ for every $s$, with equality at an $\hat s^{*}$ where $\bar f$ attains the supremum (assumed: with demand that is not log-concave the maximizer can jump over $\bar f$); there $F(\hat s^{*})=m_1\lambda(\bar f)(\bar f-\hat s^{*})+\mu m_1^{+}\hat s^{*}=\mu m_1^{+}\bar f=R^{*}$.
$\hat s^{*}\ge0$ when the fee maximizing $\lambda(f)f$ is at most $\bar f$, and $\hat s^{*}\in[0,k]$ when also $k\ge\hat s^{*}$.
\end{proof}

\begin{proof}[Proof of Proposition~\ref{prop:skew}]
\emph{Diffusion reduction.} Expanding $h(b)-h(b-\xi)=\xi h'(b)-\tfrac12\xi^2h''(b)+O(\xi^3)$
in \eqref{eq:shadow} (with $h$ in place of $V$),
\[
  \hat s(b)=h'(b)-\tfrac12h''(b)\tfrac{m_2}{m_1}+O\Big(\tfrac{m_3}{m_1}\Big)+R(b),
\]
where $R(b)$ collects the refusal term and the truncation of the integral
at $b$; both are bounded by a constant times $\bar G(b)/m(b)$, which is
$O(\varepsilon)$ in the region considered. Since $|h''|/|h'|=O(1/\chi)$,
$\hat s=h'+O(\varepsilon)$. Expanding the replenishment term the same way and
maximizing the demand term over $f$ turns the average-reward optimality
equation into \eqref{eq:hjb-diff}.
\emph{The curve.} Put $s=h'$. Then \eqref{eq:hjb-diff} reads
$g=F(s)+\tfrac{\nu_0}{2}s'$, which is \eqref{eq:ode}; $s(c)=0$ because
refused replenishment at $b=c$ is costless (reflection without cost). This is the
reduction of~\cite[\S3]{ata2005drift}: with drift $d(f)=\mu m_1^{+}-m_1\lambda(f)$,
reflected coordinate $z=c-b$, and cost rate $\frac{m_1\Lambda}{\kappa e}-m_1\lambda(f)f$
(restricting $f\ge1/\kappa$, which is without loss since $s\ge0$), their
$\varphi$ and $\gamma$ become $F-\frac{m_1\Lambda}{\kappa e}$ and
$\frac{m_1\Lambda}{\kappa e}-g$, so their eqs.~(20)--(22) are \eqref{eq:ode}
and the quadrature. The equation holds for the optimal $g$, which the lower
(depletion or refill) boundary pins; in the diffusion limit the fee at $b=0$
is a degenerate control, so we take $g$ from the jump model. For exponential
demand the supremum in $\Phi$ is attained at $f=s+1/\kappa$, so
$f^{*}=\hat s+1/\kappa$ and $\Phi(s)=\frac{m_1\Lambda}{\kappa e}e^{-\kappa s}$.
Then $F'(s)=\mu m_1^{+}-m_1\lambda(s+1/\kappa)$ and $F''(s)=\kappa m_1\lambda(s+1/\kappa)>0$,
so $F$ is strictly convex with its minimum where $m_1\lambda(s+1/\kappa)=\mu m_1^{+}$,
i.e.\ at $\hat s^{*}$, and $F(\hat s^{*})=\frac1\kappa\mu m_1^{+}+\mu m_1^{+}\hat s^{*}=R^{*}$.
\emph{(i)} Since $F(s)\ge F_{\min}>g$ on $s\ge0$, \eqref{eq:ode} gives $s'<0$ everywhere.
\emph{(ii)} At $b^{*}$, $F(s(b^{*}))=R^{*}$, which gives \eqref{eq:slope}; $b^{*}$ is
the flow-balance point because the controlled drift
$\mu m_1^{+}-m_1\lambda(f^{*}(b))$ vanishes exactly where $s=\hat s^{*}$; and
$-s'(b)=\frac2{\nu_0}(F(s(b))-g)\ge\frac2{\nu_0}(R^{*}-g)=a$.
\emph{Remark.} If $\Lambda m_1\le e\,\mu m_1^{+}$, then for $s\ge0$,
$F'(s)=\mu m_1^{+}-m_1\lambda(s+1/\kappa)\ge\mu m_1^{+}-m_1\Lambda/e\ge0$, and
under the optimal fee $f^{*}\ge1/\kappa$ the drift
$\mu m_1^{+}-m_1\lambda(f^{*})\ge\mu m_1^{+}-m_1\Lambda/e\ge0$, so the balance
drifts toward $c$, where $s(c)=0$ and $f^{*}=1/\kappa$: the fee there is the markup alone.
With discounting, $\rho V$ replaces $g$ in \eqref{eq:hjb-diff} and the same steps
give $a=2(R^{*}-\rho V(b^{*}))/\nu_0$.
Integrating \eqref{eq:ode} gives $b(s)=c-\frac{\nu_0}{2}\int_0^{s}\frac{dv}{F(v)-g}$.
\end{proof}

\begin{proof}[Proof of Proposition~\ref{prop:value}]
\emph{(i)} By Proposition~\ref{prop:ceiling}, $g\le R^{*}$.
Equality $g=R^{*}$ would need the fee to equal $\bar f$ almost
surely (the supremum defining $\Phi(\hat s^{*})$ is attained only there), and that
constant fee refuses demand at a positive rate (below), so $g<R^{*}$. Under the
constant fee $\bar f$ without refills, let both sides' sizes have law $G$; then
$m_1=m_1^{+}$ and flow balance give $\lambda(\bar f)=\mu$, and the uniform law on $[0,c]$
satisfies detailed balance: a jump from $b$ to $b-x$, allowed iff $x\le b$, has rate
$\lambda(\bar f)G(dx)$, and the reverse jump from $b-x$ to $b$, allowed iff $b\le c$, has rate
$\mu G(dx)$. Under it the refused demand volume and count per unit time are
$\lambda(\bar f)c^{-1}\!\int_0^c\mathbb E[\xi\mathbf 1\{\xi>b\}]\,db=\lambda(\bar f)\mathbb E[\xi(\xi\wedge c)]/c$ and
$\lambda(\bar f)\mathbb E[\xi\wedge c]/c$; for $\xi\le c$, with $\nu_0=2\lambda(\bar f)m_2$, these are
$\nu_0/(2c)$ and $(m_1/m_2)\,\nu_0/(2c)$. For any size laws, in the diffusion regime the balance is a driftless
Brownian motion with variance rate $\nu_0$ reflected at $0$ and $c$, again uniform, with refused volume
rate $\tfrac{\nu_0}{2}\cdot\tfrac1c$ (the local-time rate at $0$). Each unit of refused volume costs the
fee $\bar f$ and each refused forward $c_{\mathrm r}$, which gives the stated revenue. Subtracting gives the gain bound,
and \eqref{eq:slope} with $g\ge R^{*}-(\bar f+c_{\mathrm r}m_1/m_2)\nu_0/(2c)$ gives
the slope bound.
\emph{(ii)} With drift $d<0$ the local-time rate at $0$ is $|d|/(1-e^{-2|d|c/\nu_0})=|d|(1+o(1))$ when $|d|c\gg\nu_0$; under the affine skew the balance is an Ornstein--Uhlenbeck process around $b_{\mathrm{eq}}$ with variance $\nu_0/2\theta$, whose boundary densities are Gaussian tails.
The exponent holds while $\kappa a\min(b_{\mathrm{eq}},c-b_{\mathrm{eq}})\lesssim1$;
beyond, the drift toward depletion saturates at $\mu m_1^{+}$ and the exponent
grows linearly, as $2\mu m_1^{+}b_{\mathrm{eq}}/\nu_0$.
\end{proof}

\subsection{Exact single-channel solutions}
We solve the jump-model optimality equation exactly on $18$ instances, $\kappa\in\{5,10,20\}$, two demand-to-replenishment ratios and three refill costs. On all $18$ instances $h$ is nondecreasing and concave except on at most $5\%$ of grid points, at the refill kink and the boundaries; $f^{*}$ is nonincreasing except on one instance, on $7.5\%$ of grid steps just above the refill trigger.
The $16$ of the $18$ with an interior flow-balance point satisfy the slope identity
\eqref{eq:slope} within $0.12\%$ ($0.13\%$ with the $8$ binding grid instances whose flow-balance point lies well above the trigger). Table~\ref{tab:t1} (App.~\ref{app:t1}, $K=0.05$) gives the jump model's loss under the ODE curve
\eqref{eq:ode}, the best constant fee (with and without $c_{\mathrm r}$) and $1/\kappa$. Given the exact $g$, on the $6$
instances of that table the curve \eqref{eq:ode} loses at most
$0.3\%$ of optimal revenue, against $0.7$--$5.4\%$ for the best constant fee ($0.7$--$2.4\%$ without $c_{\mathrm r}$). The grid of \S\ref{sec:sg-exact} has $144$ instances ($\kappa\in\{5,10,20\}$, demand-to-replenishment ratios $1.5$--$8$, forwards of $2$--$20\%$ of capacity, cheap and costly refills, with and without $c_{\mathrm r}$). The anchored skew~\eqref{eq:trim} at $1/2$, $1$ and $2$ times the best constant fee takes $\hat s_L$ from the flat fee's value function and $\bar b$ under the posted fee (\texttt{t1\_skew\_insurance.py}); the $28$ instances where the half level drains the channel into a net loss all have $c_{\mathrm r}>0$.

\paragraph{Small forwards and estimation error.} With forwards scanned from $16$ to $1\%$ of capacity ($\kappa=10$, demand $3$ or $8$ times replenishment, with and without $c_{\mathrm r}$, no refills; \texttt{t1\_asymptotic.py}), the balance under a flat fee is a reflected diffusion whose boundary local times give the best flat fee's loss $\mu m_1^{+}\delta\min_x[\bar f\vartheta(x)+c_{\mathrm r}'\vartheta(-x)-x/\kappa]+O(\delta^2)$, with $\delta=\nu_0/(2c\mu m_1^{+})$ the share of volume refused, $\vartheta(x)=x/(1-e^{-x})$ and $c_{\mathrm r}'=c_{\mathrm r}m_1/m_2$ ($x=0$ gives $\Gamma$); with demand eight times replenishment the exact loss is within $2.2\%$ of it at every size, and within $0.4\%$ of $\Gamma$ without $c_{\mathrm r}$. In the small-forward (wide-channel) limit the optimum loses $2\pi^2\mu m_1^{+}\delta^2/\kappa$, from the Riccati form of \eqref{eq:hjb-diff}: the exact ratio rises from $0.33$ to $0.91$ as forwards shrink, and the gain reaches $0.82$ of $\Gamma$ at $1\%$. With demand three times replenishment, nearer the markup regime, convergence is slow: the first-order term is $1.3$--$2.3$ times too small and the $\delta^2$ term reaches $0.15$--$0.42$ of the exact loss. Scaling the anchored skew's shadow price by $\beta\in\{0.5,0.8,1.25\}$ on ten instances (\texttt{t1\_skew\_robust.py}) changes the loss at the best flat fee by at most $1.1$ points; at half that fee, where six of them drain into a net loss, any $\beta>0$ cuts that loss (from $101$--$262$ to $57$--$122\%$).

\paragraph{Other size laws and demand curves.} The $2{,}592$ instances (\texttt{t1\_general.py}) cross exponential, lognormal and Pareto forward sizes with exponential, mixed (captive plus competed, $w=0.44$) and linear demand, six demand-to-replenishment ratios, four forward sizes, four refill costs and three refusal penalties; the flat-fee gap is $1.1$--$2.3\%$ at the median and at most $15.8$, $24.1$ and $21.5\%$ by demand. On the $684$ with $\bar f$ above the fee that maximizes $\lambda(f)f$ and $k\ge\hat s^{*}$, the optimum never exceeds $R^{*}$ (at most $0.999R^{*}$).

\paragraph{Inbound fees.} When the node also posts an inbound fee $g_{\mathrm{in}}$ on replenishment, accepted with probability $e^{-\kappa g_{\mathrm{in}}}$ (\texttt{t1\_inbound.py}; six instances, forwards of $8\%$ of capacity), its best inbound fee is $1/\kappa$ less the marginal value of incoming balance, so the skew enters both fees with opposite signs. A balance-dependent pair adds $3.8$--$8.3\%$ over the best flat pair ($0.1$--$7.2\%$ for the outbound fee alone), while a flat fee $20\%$ below or $25\%$ above the best one, with the inbound fee re-optimized, loses $10.8$--$18.8\%$. At a hub the entering forward also pays an outbound fee elsewhere, so we report the structure, not the inbound income.

\subsection{The hub}
\paragraph{Proof of Proposition~\ref{prop:hub}.}
(i) Channel $i$'s generator \eqref{eq:gen} involves the others only through
$\mu_i=\sum_{j\ne i}\pi_{ij}\lambda_j(L+s_j(b_j))$. Its time average
$\mu_i(L)=\sum_{j\ne i}\pi_{ij}\Lambda_j\ell(L)\bar\ell_j$, with $\bar\ell_j$ the
mean of $\ell(L+s_j)/\ell(L)$ ($1$ at the anchor of \eqref{eq:trim}), decouples
the channels into problems of \S\ref{sec:single}; Proposition~\ref{prop:fee-struct} gives each skew.
(ii) Under flat fees $\bar\ell_j=1$, so $\mu_i(L)/\lambda_i(L)=\sum_{j\ne i}\pi_{ij}\Lambda_j/\Lambda_i$,
free of $L$, so $d_i(L)=\lambda_i(L)\big(m_1^{+}\mu_i/\lambda_i-m_1\big)$ has a
sign free of $L$ and scales with the forward rate.
(iii) $g_i=L\,m_1\lambda_i(L)(1-p_i)$ less refusal and refill costs, with
$p_i$ the refused share; where these are rare,
$g_i=L\,m_1\Lambda_i\ell(L)+o(1)$, and the sum's maximizer is free of the
$\Lambda_i$. For $\ell=e^{-\kappa f}$, $L\ell(L)$ peaks at $1/\kappa$ and
$rL^{*}\ell(rL^{*})/L^{*}\ell(L^{*})=re^{1-r}$.
(iv) By Proposition~\ref{prop:skew} the level is $\max\{L^{*},\bar f_i\}$ with
$m_1\lambda_i(\bar f_i)=m_1^{+}\mu_i$. Raising $f_j$ lowers $\mu_i$ and so raises $\bar f_i$:
the map $f\mapsto(\min\{\max\{L^{*},\bar f_i(f_{-i})\},f_{\max}\})_i$ is nondecreasing on the
complete lattice $[0,f_{\max}]^n$, with $f_{\max}$ a fee above which demand is negligible, so
Tarski's theorem gives a fixed point; with
shapes $\ell_j$ that differ, so does the markup $\arg\max_L L\ell_j(L)$.
(v) Under the decoupling of (i), it sums Proposition~\ref{prop:value}(i) over the channels. \qed

\paragraph{Exactly solved hubs.} The hub of \S\ref{sec:sg-hub} has $n=2$ modelled channels (capacity $1$, a grid of $40$ balance steps each) and a deep channel $R$, the rest of the hub, with no liquidity limit (\texttt{t1\_hub\_exact.py}). A forward from $i$ to $o$ ($i\ne o$ in $\{1,2,R\}$) arrives at potential rate $\Lambda_{io}$, is accepted with probability $e^{-\kappa_o f_o}$, has an exponential size of mean $m_1\in\{0.05,0.1\}$ on the grid, needs its size on the peer's side of $i$ and on the hub's side of $o$ (else it is refused, at cost $c_{\mathrm r}=0.01$), and moves its size from $b_o$ to $b_i$. Rates, costs and solver details are in App.~\ref{app:hubx}, the grid in Table~\ref{tab:hub}.

\begin{table}[H]
\centering
\caption{Exactly solved two-channel hubs ($48$ instances: forwards of $5$ and $10\%$ of capacity, drain ratios $1$, $1.5$ and $3$, two traffic shares, equal or different demand shapes, two refill costs): loss against the exact optimum (\%, median and range; above $100\%$ a net loss).}
\label{tab:hub}
\Description{Table of the losses of four policies against the exact optimum on two-channel hubs, by flow balance and demand shape.}
\footnotesize
\setlength{\tabcolsep}{1.6pt}
\begin{tabular}{@{}lrrrr@{}}
\toprule
Hub & Uniform & Per-ch. & Unif.\,$+$\,skew & Per-ch.\,$+$\,skew \\
\midrule
bal., equal shapes (8) & $12.6$ \scriptsize{$11$--$17$} & $11.4$ \scriptsize{$11$--$16$} & $5.8$ \scriptsize{$3$--$12$} & $3.9$ \scriptsize{$3$--$7$} \\
drain., equal shapes (16) & $42.3$ \scriptsize{$21$--$84$} & $11.0$ \scriptsize{$8$--$16$} & $39.7$ \scriptsize{$19$--$163$} & $4.8$ \scriptsize{$3$--$8$} \\
bal., shapes differ (8) & $31.6$ \scriptsize{$27$--$36$} & $12.7$ \scriptsize{$12$--$17$} & $27.8$ \scriptsize{$25$--$31$} & $4.2$ \scriptsize{$3$--$6$} \\
drain., shapes differ (16) & $22.7$ \scriptsize{$17$--$34$} & $12.0$ \scriptsize{$11$--$16$} & $19.8$ \scriptsize{$11$--$30$} & $4.8$ \scriptsize{$4$--$8$} \\
\bottomrule
\end{tabular}
\end{table}

\section{Evaluation Design}
\label{app:evalx}

\paragraph{Parameters and code.}
Distributions, seeds, grids, tool rules and solver settings are in App.~\ref{app:params}; code, raw outputs and a number checker will be made public upon acceptance; null tests check that each extension reproduces the base simulator~\cite{lnfeeempirical} bit for bit.

\paragraph{Traffic and calibration.} Offered focal demand falls exponentially in the posted fee near the deployed level ($R^2=0.95$ over four exogenous fee scalings). A policy's gap interpolates the frontier per seed at the policy's own success. $L^{*}$ comes from a calibration scan of
$15$ uniform levels ($25$--$12{,}000$\,ppm) with balances frozen, never on the grid edge in any setting. The captive share is computed from the snapshot (\texttt{analysis/param\_topo.py}). The captive-demand prediction of \S\ref{sec:ev-peak} peaks at $4{,}990$\,ppm, stays within $0.02$ of the scan's revenue relative to its peak at every level, and moves to $1/\kappa$ for $\kappa$ from $100$ to $2{,}000$, where the evaluated best levels lie at $0.6$--$1.0$ times it. The demand mixture of \S\ref{sec:model-informal} is a two-part fit $\lambda(L)=\Lambda[(1-w)e^{-(\kappa_c+200)L}+we^{-200L}]$ to the calibration scan, with captive share $w=0.44$. The frozen predictions for $10$, $20$ and $80$ hubs and their test are in App.~\ref{app:stats}. With $160$ and $320$ hubs repricing ($5$ seeds, not predicted in advance) the best evaluated level stays at $3{,}500$--$4{,}900$\,ppm, earning $15.5$ and $19.1\times$ the deployed fees' revenue with shared memory and $4.3$ and $5.3\times$ under mission control, while network success at it falls to $0.48$ and $0.45$ with $320$ hubs (from $0.75$ and $0.70$ at deployed fees).

\paragraph{Tools.}
A ``day'' of LNDg is $300$ payments; with a day of $100$ or $30$ payments or its default cap, LNDg lies $1.1$--$16.6\%$ below the frontier, and \texttt{charge-lnd} re-evaluated on every balance change $7.3\%$ and $11.1\%$. We reimplement \texttt{charge-lnd}'s linear rule in the local share, feeadjuster's $50^{0.5-x}$ multiplier and LNDg's daily tiered adjustment. On a grid extended to $12{,}000$\,ppm feeadjuster's score keeps rising to the edge, but the pick there lies $13.1\pm6.6\%$ below the frontier on the evaluation seeds, against $4.0\pm5.4\%$ for the pick we report. Choosing the frontier's levels on the other seeds changes the tools' gaps by at most $0.8$ points (\texttt{loo\_frontier.py}). Paired by seed, no tool earns significantly more than the flat level at its center (per-tool intervals in App.~\ref{app:stats}).

\paragraph{Power.} With $10$ seeds, a tool tied with the frontier would be shown not to exceed it by more than $4$--$5\%$ (mission control) and $7$--$10\%$ (shared memory) with $80\%$ power at the strictest Holm level; the tuned tools lie $2$--$11\%$ below it (on a wider frontier that also includes $2{,}450$, $6{,}125$ and $12{,}000$\,ppm). Per-test power and probe sizes: App.~\ref{app:stats}.

\subsection{Full design and results}
\label{app:evfull}
The tuned picks are, with shared memory and mission control, LNDg starting at $4{,}900$\,ppm, \texttt{charge-lnd} over $(2{,}450,9{,}800)$\,ppm and feeadjuster soft at $4{,}900$ / $2{,}450$\,ppm, and as multiples \texttt{charge-lnd} $(16,1024)\times$ and feeadjuster default at $512\times$.

\paragraph{Ordering across settings.}\label{app:x26} Per-setting results for every row of Fig.~\ref{fig:network}(b) (full capacities with $10$ seeds, others as listed there) are in App.~\ref{app:settings}; Table~\ref{tab:bridge} summarizes them.

\paragraph{Refills and the skew.}\label{app:skew} With economic refills ($K=100$\,sat plus $300$\,ppm of the amount, only when the liquidity sold covers them), hubs at deployed fees refill under once per run, since refills rarely pay there, and the skew changes their net revenue by at most $0.1$k.

\paragraph{The 2019 snapshot.}\label{app:2019}
On the 2019 snapshot of Beres et al.~\cite{lnfeeempirical} (real capacities; $9{,}000$ payments, shared memory, $K\in\{100,1{,}000\}$) two-thirds of the channel policies post LND's default $1$\,ppm. Across hub tiers, destination skews, capacity scales and $\kappa\in\{100,400\}$, the level earns $3.6$--$15.5\times$ the deployed fees' revenue at $L^{*}=2{,}406$--$9{,}671$\,ppm, with $59$--$97\%$ fewer refills, and the skew cuts refills at deployed fees by $10$--$43\%$. \textsc{Fence}, the Merchant and Di Stasi et al., calibrated to each channel's deployed rate, earn $2.1$--$13.7$k on 2019, below the deployed fees' $20.3$k, and on 2026 lie $12$--$89\%$ below the frontier at their success. The tools' gains of \S\ref{sec:conclusion} come from runs with $\kappa=100$ and senders that route on known balances.

\section{The Level}
\label{app:level}

\paragraph{Levels in public gossip.}\label{app:gossip}
In the $336$ gossip snapshots of~\cite{valko2025geo} (2019--2023, posted fees) we follow the nodes ever among the $40$ largest. Their level moves $1.7$ times a year, by a median $2.3\times$, most channels at once. Pooled over fee setters (an operator's nodes counted as one), $50\%$ of the posted log-fee variation is the level over time, $33\%$ persistent differences across channels and $16\%$ the time-varying rest. Raises stick: after six months $83\%$ are kept or extended, against $60\%$ of cuts (Fisher $p=0.006$), and the hubs' median level rose from $1$\,ppm in 2019--20 to $149$--$197$\,ppm in mid-2023.

\paragraph{Competition.} When all reprice, $L^{*}$ is not an equilibrium: in an exploratory best-response check ($5$ hubs, $10$ seeds), with every hub at $L^{*}$ none gains by raising to $8{,}000$\,ppm and most gain by cutting, taking revenue from the other hubs, as the $40$ together earn no more.

\sloppy

\section{Exactly Solved Hubs: Rates, Solver and Three-Channel Hubs}
\label{app:hubx}

\paragraph{Exactly solved hubs, rates and costs.} With $g=(\rho^{1/2},\rho^{-1/2})$, $\Lambda_{io}=wFg_o/g_i$ between the modelled channels, $\Lambda_{Ro}=(1-w)Fg_o$ and $\Lambda_{iR}=(1-w)F/g_i$, $F=0.4/m_1$, $\rho\in\{1,1.5,3\}$ and $w\in\{0.5,0.8\}$; $\kappa=10$ on every exit, or $20$ on channel~$1$ when the shapes differ; refills top any set of channels up at cost $K+k\Delta$ per channel, $K\in\{0.05,0.5\}$, $k=0.05$.

\paragraph{Exactly solved hubs: solver and per-instance results.} Policy iteration with exact linear solves gives the optimum, every exit's fee set state by state in closed form; for a fixed fee rule the same iteration optimizes the refills. The best uniform level is found by a bounded scalar search, the per-channel levels (both channels and $R$) by coordinate ascent from it; the anchored skew uses each channel's shadow price averaged over the other balance under the flat policy's stationary law, anchored at the time-average balance. Part~(v) applies to the $24$ hubs with equal shapes. Per instance, in the $8$ two-channel hubs with balanced flows and equal shapes the channels' best levels differ from the uniform level by at most $2.3\%$ and add $0.1$--$2.0$ points; the $8$ that meet $k\ge\hat s^{*}$ reach $0.41$--$0.83$ of the bound of (v), and of the $16$ that do not, $2$ exceed it ($1.18$ and $1.51$). With three modelled channels ($4$ instances, forwards of $20\%$ of capacity, drain ratios $1$ and $3$) the cases repeat: with balanced flows the channels' own best levels match the uniform level and add at most $1.6$ points; with a draining channel the uniform level loses $33$--$39\%$, per-channel levels $9$--$12\%$, and a skew on the uniform level $36$--$67\%$; the bound of (v) holds in all four.

\paragraph{Exactly solved hubs, global check.} On three instances a global grid over the three levels ($7^3$ cells) followed by coordinate ascent reaches the same optimum as coordinate ascent from the best uniform level (\texttt{t1\_hub\_global.py}).

\section{Hubs Repricing Alone and Competition}
\label{app:alone}

\paragraph{A hub alone.} A hub repricing alone, all other nodes at deployed fees ($10$ seeds; the largest hub and, drawn at random, one per quartile of deployed revenue, two under mission control), gains less: under mission control $1{,}000$\,ppm raises the revenue of seven of nine hubs, by $1.2$--$2.9\times$ (one more from a $29$-sat base), and lowers two; $L^{*}$ raises four significantly and lowers three. With shared memory $1{,}000$\,ppm raises each of the five hubs tried, by $2.5\times$ or more.

\paragraph{Competition.} When all reprice,
$L^{*}$ is not an equilibrium. In an exploratory best-response check ($5$
hubs: the largest, $H$, and one per quartile of deployed revenue; $10$ seeds), with every hub at $L^{*}$ none gains by
raising to $8{,}000$\,ppm and most gain by cutting: $H$ most at $2{,}000$\,ppm of the levels tried ($+21\pm9\%$ shared memory, $+44\pm15\%$ mission control),
taken from the other hubs, as the $40$ together earn no more.

\section{Per-Hub and Per-Class Levels}
\label{app:perhub}

\paragraph{Per-hub and per-class levels.} Nor do levels that differ across hubs beat the uniform one. From $L^{*}$ we ran coordinate ascent on the calibration seeds, halving or doubling each hub's level. With shared memory it kept none of the $80$ moves, as none gained more than twice its paired standard error, nor any of the $8$ moves of four degree strata (App.~\ref{app:evalx}). Under mission control it kept seven, whose separate gains of $8.3\%$ shrank to $0.7\%$ together, as the hubs took traffic from each other; on the evaluation seeds the seven lost $7.8\pm4.4\%$ of revenue against $L^{*}$. Nor does a level per channel class, which the exact hubs of \S\ref{sec:hub} favor. We class each hub's channels as draining ($6$--$8\%$ of them) or filling by net flow at $L^{*}$. The best split over $\{L^{*}/2,L^{*},2L^{*}\}^2$ on the calibration seeds puts $2L^{*}$ on draining channels in both models, as predicted. But it earns no more than $L^{*}$ there, and $-0.6\pm7.2\%$ and $-4.5\pm4.7\%$ on the evaluation seeds.

\section{Skew Amplitude and Forward Sizes}
\label{app:amp}

\paragraph{Skew amplitude.} The anchored skew at deployed fees with its shadow-price term scaled by $a$ (\texttt{skew\_amp}; $a=1$ is the paper's skew), on the skew tables of $K=100$ and $K=1{,}000$, seeds 0-4, paired with the deployed fees; refills on the fixed trigger, net at the table's $K$. The refill cut and gross revenue do not move with $a$.

\begin{table}[H]
\centering\scriptsize\setlength{\tabcolsep}{2.5pt}
\resizebox{\ifdim\width>\columnwidth\columnwidth\else\width\fi}{!}{%
\begin{tabular}{@{}llllll@{}}
\toprule
model & $K$ & $a$ & gross (\%) & refills (\%) & net at $K$ (k) \\
\midrule
shared & 100 & 0.5 & -0.6 & -22.3 & +3.8 $\pm$ 1.2 \\
shared & 100 & 1 & -0.9 & -21.1 & +3.6 $\pm$ 0.8 \\
shared & 100 & 2 & -1.3 & -21.7 & +3.6 $\pm$ 0.9 \\
shared & 100 & 4 & -0.8 & -23.0 & +3.9 $\pm$ 1.1 \\
shared & 100 & 8 & -0.2 & -19.5 & +3.3 $\pm$ 0.6 \\
shared & 1000 & 0.5 & -0.9 & -21.2 & +36.3 $\pm$ 6.1 \\
shared & 1000 & 1 & -0.5 & -20.5 & +35.2 $\pm$ 6.1 \\
shared & 1000 & 2 & +1.4 & -20.6 & +35.5 $\pm$ 4.7 \\
shared & 1000 & 4 & -0.9 & -20.7 & +35.3 $\pm$ 4.8 \\
shared & 1000 & 8 & +1.3 & -20.1 & +34.5 $\pm$ 4.1 \\
mc & 100 & 0.5 & +0.1 & -6.0 & +0.2 $\pm$ 0.0 \\
mc & 100 & 1 & +0.2 & -6.0 & +0.2 $\pm$ 0.0 \\
mc & 100 & 2 & +0.2 & -7.4 & +0.2 $\pm$ 0.1 \\
mc & 100 & 4 & +0.2 & -8.1 & +0.3 $\pm$ 0.1 \\
mc & 100 & 8 & +0.2 & -6.0 & +0.2 $\pm$ 0.1 \\
mc & 1000 & 0.5 & +0.1 & -7.4 & +1.6 $\pm$ 0.7 \\
mc & 1000 & 1 & +0.1 & -6.0 & +1.5 $\pm$ 1.0 \\
mc & 1000 & 2 & +0.1 & -6.6 & +1.6 $\pm$ 1.4 \\
mc & 1000 & 4 & +0.2 & -6.0 & +1.5 $\pm$ 1.1 \\
mc & 1000 & 8 & +0.1 & -8.1 & +1.9 $\pm$ 1.0 \\
\bottomrule
\end{tabular}
}
\end{table}

\paragraph{Forward sizes.} On 2026 the size-biased mean forward $m_2/m_1$ is $6\%$ of capacity at the volume-weighted median hub channel ($3$--$13\%$ interquartile; $18\%$ at the median hub channel that carries traffic).

\section{Level Search Details}
\label{app:search}

\paragraph{Forward-counting level search.} Each iteration probes $L/\sqrt2$ and $L\sqrt2$ for an epoch of payments each, counts the key's forwards (not its revenue), estimates $\hat\kappa=\ln(n_{lo}/n_{hi})/(L\sqrt2-L/\sqrt2)$, moves to $1/\hat\kappa$ (at most a factor $2$ per iteration, inside $[L_0/16,16L_0]$; no fall in forwards moves up by $2$), and holds the level once a move is under $10\%$. "joint": one level for the $40$ hubs, forwards pooled; "hub": each hub on its own forwards. $24{,}000$ payments, seeds 0-4; revenue against uniform $4{,}900$ ppm (guard26, same seeds).

\begin{table}[H]
\centering\scriptsize\setlength{\tabcolsep}{2.5pt}
\resizebox{\ifdim\width>\columnwidth\columnwidth\else\width\fi}{!}{%
\begin{tabular}{@{}llll@{}}
\toprule
model & arm & end (ppm) & rev.\ vs $L^{*}$ (\%) \\
\midrule
shared & elast\_hub\_e1500\_L1000 & 10548 & -32.6 $\pm$ 6.9 \\
shared & elast\_joint\_e1500\_L1000 & 4282 & -15.3 $\pm$ 6.8 \\
shared & elast\_joint\_e1500\_L350 & 4702 & -29.0 $\pm$ 8.7 \\
shared & elast\_joint\_e1500\_L8000 & 5630 & -5.9 $\pm$ 7.4 \\
shared & elast\_joint\_e3000\_L1000 & 5091 & -27.6 $\pm$ 3.4 \\
shared & levelsearch\_joint\_L1000 & 3380 & -18.9 $\pm$ 15.9 \\
mc & elast\_hub\_e1500\_L1000 & 8747 & -11.6 $\pm$ 10.4 \\
mc & elast\_joint\_e1500\_L1000 & 3607 & -3.5 $\pm$ 8.3 \\
mc & elast\_joint\_e1500\_L350 & 3320 & -17.0 $\pm$ 8.6 \\
mc & elast\_joint\_e1500\_L8000 & 3977 & -5.3 $\pm$ 12.1 \\
mc & elast\_joint\_e3000\_L1000 & 3488 & -11.4 $\pm$ 10.4 \\
mc & levelsearch\_joint\_L1000 & 2297 & -7.5 $\pm$ 10.5 \\
\bottomrule
\end{tabular}
}
\end{table}

\paragraph{Per-hub forward counting with long probes.} Each hub searches on its own forwards with probes of $6{,}000$ payments, four times those above, over $60{,}000$ payments from $1{,}000$ or $8{,}000$\,ppm, seeds 0-4. On average only $9$--$15$ of the $40$ hubs settle, and the hubs earn $12$--$35\%$ less than posting $L^{*}$ throughout: a hub's probe holds too few forwards for the count ratio, so the estimated elasticity is biased low and the level drifts up.

\paragraph{Noise of revenue against forward counts.} On evaluation seeds 0-9 at uniform $1{,}000$ and $4{,}900$\,ppm, the squared coefficient of variation of a hub's revenue across seeds is a median $3.2$--$5.6$ times that of its forward count (hubs with at least $5$ forwards per seed), and $4.9$--$18.5$ times for the $40$ hubs pooled: comparing forward counts needs proportionally fewer forwards than comparing revenue.

\paragraph{Level search, details.} When all $40$ hubs search at once ($24{,}000$ payments,
$5$ seeds), the mean level moves from $350$ to about $700$, from $1{,}000$ to
$1{,}300$--$1{,}400$ and from $L^{*}$ to $3{,}500$--$3{,}900$\,ppm, still
falling: where the search stops depends on where it starts. With the guard, from $L^{*}$ and $8{,}000$\,ppm it earns $5$--$10\%$ more than the unguarded search (significant with shared memory), and from every start it stays within $8\%$ of holding the start level. 

\begin{itemize}
\item Each hub posts one level $L$, probes $Lu$ and $L/u$ for $600$ payments each and moves $L$ by $u$ toward the probe that earned more on its own forwards; $24{,}000$ payments, $5$ seeds. The step $u$ starts at $1.5$ and is multiplied by $0.7$ after each reversal, down to $1.05$. Probes should be sized by own forwards until two probes differ by twice their noise.
\item Guard: a move is undone when the next pair loses more than twice the standard deviation of the hub's last eight pairs; probes stay within $[L_0/2,2L_0]$.
\item A fixed $10\%$ threshold, inside a pair's noise, undoes $150$--$240$ moves per run, not $50$, and freezes the search.
\end{itemize}

\section{Success-Constrained Tool Picks}
\label{app:succ}

\paragraph{Success-constrained tool picks.} On the calibration seeds each tool family's best ppm configuration whose mean success reaches that of uniform $1{,}000$\,ppm (shared $0.709$, mission control $0.662$), scored by gross less $100$ sat per refill; evaluated on seeds 0-9 against the Table 3 frontier.

\begin{table}[H]
\centering\scriptsize\setlength{\tabcolsep}{2.5pt}
\resizebox{\ifdim\width>\columnwidth\columnwidth\else\width\fi}{!}{%
\begin{tabular}{@{}llll@{}}
\toprule
model & pick & success & gap (\%) \\
\midrule
shared & clndabs\_150\_2400 & 0.713 & -8.3 $\pm$ 5.1 \\
shared & feeadjabs\_soft\_1200 & 0.714 & -8.1 $\pm$ 4.7 \\
shared & lndgabs\_1200 & 0.707 & -10.0 $\pm$ 2.2 \\
mc & clndabs\_300\_1200 & 0.676 & +2.9 $\pm$ 2.2 \\
mc & feeadjabs\_soft\_600 & 0.687 & +2.7 $\pm$ 1.6 \\
mc & lndgabs\_600 & 0.701 & -4.3 $\pm$ 1.3 \\
\bottomrule
\end{tabular}
}
\end{table}

\section{Welfare and Who Pays}
\label{app:welfare}

\paragraph{Welfare.} An accepted payment gives its sender an expected surplus of $v/\kappa$ and fees are transfers; a refused sender, who may still pay later, counts as a total loss, so the losses below are upper bounds. On 2026, from deployed fees, $1{,}000$\,ppm lowers total surplus,
net of the hubs' refills, by $2.0\pm0.6\%$ (shared memory) and $0.5\pm0.3\%$
(mission control), and $L^{*}$ by $7.3\pm1.3\%$ and $6.7\pm0.9\%$, while
sender surplus falls by $16$--$18\%$. The surplus lost is $0.42$ and $0.11$ of
the hubs' net gain at $1{,}000$\,ppm, and $0.78$ and $0.99$ at $L^{*}$: at the
peak the network loses about what the hubs gain.

\paragraph{Who pays.} Senders pay for the level: posting $L^{*}$ together lowers success, and payments with a detour route around the hubs (at $L^{*}$ the hubs forward $80$--$93\%$ less, the network only $16$--$18\%$).

\begin{itemize}
\item We scale revenue to the $6.6$ million routed payments of August 2023 (River): the $40$ hubs ($80\%$ of public capacity) earn $0.03$--$0.17\%$ a year on their side of their channels at deployed fees and $0.5$--$0.6\%$ at $L^{*}$: the multiples start from a low base. This is a scale check, not a calibration: it counts idle channels, at 2023 volumes.
\end{itemize}

\section{Proposition~\ref{prop:skew} Against the Exact Jump Model}
\label{app:t1}

\begin{table}[H]
\centering
\caption{Proposition~\ref{prop:skew} against the exact jump model ($K=0.05$).
ratio: $\Lambda m_1/\mu m_1^{+}$; $a$: exact slope at $b^{*}$; id.\,err: error of \eqref{eq:slope}; other columns: loss in \% of optimal revenue.}
\label{tab:t1}
\Description{Table of revenue losses against the exact jump model for the diffusion fee curve, the best constant fee with and without the refusal penalty, and the markup, on six instances.}
\scriptsize
\setlength{\tabcolsep}{2.2pt}
\begin{tabular}{@{}rrrrrrrr@{}}
\toprule
$\kappa$ & ratio & $a$ exact & id.\,err\% & ODE & best const & ($c_{\mathrm r}{=}0$) & $1/\kappa$ \\
\midrule
$5$ & $3$ & $0.0655$ & $-0.07$ & $0.11$ & $1.27$ & $1.05$ & $6.0$ \\
$5$ & $5$ & $0.0714$ & $-0.07$ & $0.01$ & $0.69$ & $0.67$ & $13.6$ \\
$10$ & $3$ & $0.0331$ & $-0.07$ & $0.18$ & $2.02$ & $1.05$ & $13.5$ \\
$10$ & $5$ & $0.0611$ & $-0.11$ & $0.04$ & $3.31$ & $2.40$ & $41.7$ \\
$20$ & $3$ & $0.0168$ & $-0.08$ & $0.29$ & $2.61$ & $1.05$ & $28.3$ \\
$20$ & $5$ & $0.0310$ & $-0.09$ & $0.06$ & $5.40$ & $2.40$ & $98.8$ \\
\bottomrule
\end{tabular}

\end{table}

\paragraph{Slope identity.} With $g$ from the jump model, which uses no evaluation data, the identity~\eqref{eq:slope} is a consistency relation, not a prediction; where the flow-balance point lies well above the refill trigger ($16$ ergodic and $8$ binding grid instances), the identity holds within $0.13\%$, and where refills supply the inventory it fails, even in sign (Table~\ref{tab:t1}).

\paragraph{A hub alone.} A hub repricing alone, all other nodes at deployed fees ($10$ seeds; the largest hub and, drawn at random, one per quartile of deployed revenue, two under mission control), gains less: under mission control $1{,}000$\,ppm raises the revenue of seven of nine hubs, by $1.2$--$2.9\times$ (one more from a $29$-sat base), and lowers two; $L^{*}$ raises four significantly and lowers three. With shared memory $1{,}000$\,ppm raises each of the five hubs tried, by $2.5\times$ or more.

\paragraph{Welfare.} An accepted payment gives its sender an expected surplus of $v/\kappa$ and fees are transfers; a refused sender, who may still pay later, counts as a total loss, so the losses below are upper bounds. On 2026, from deployed fees, $1{,}000$\,ppm lowers total surplus,
net of the hubs' refills, by $2.0\pm0.6\%$ (shared memory) and $0.5\pm0.3\%$
(mission control), and $L^{*}$ by $7.3\pm1.3\%$ and $6.7\pm0.9\%$, while
sender surplus falls by $16$--$18\%$. The surplus lost is $0.42$ and $0.11$ of
the hubs' net gain at $1{,}000$\,ppm, and $0.78$ and $0.99$ at $L^{*}$: at the
peak the network loses about what the hubs gain.

\section{Simulation Parameters}
\label{app:params}

Script and file names refer to the code release.

\paragraph{Single-channel model.} 

\begin{itemize}
\item Jump model solved exactly on a $400$-point balance grid: $c=1$, exponential forward and replenishment sizes with $m_1=m_1^{+}=0.02$, $\mu=20$, $c_{\mathrm r}=0.01$, $k=0.05$.
\item The $18$ verification instances: $\kappa\in\{5,10,20\}$, $\Lambda m_1/\mu m_1^{+}\in\{3,5\}$, $K\in\{0,0.05,0.2\}$; the six-instance table below uses $K=0.05$.
\item The $144$-instance grid: $\kappa\in\{5,10,20\}$, demand-to-replenishment ratios $1.5$--$8$, forwards of $2$--$20\%$ of capacity, cheap and costly refills, with and without $c_{\mathrm r}$.
\item Level-error ratios on that grid: at $0.8$ or $1.25$ times the best level a flat fee loses at least $2.3$ times (median $5.3$) what any skew can add, at $2/3$ or $1.5$ times at least $4.7$ times.
\item The ODE curve with $g$ measured under the flat fee $1/\kappa$ instead of the exact $g$: it loses $3.9$--$38\%$, and still $0.4$--$6.1\%$ after five more updates of $g$.
\end{itemize}

\paragraph{Traffic and calibration.} 

\begin{itemize}
\item Payment amounts: log-normal, $e^{\mathcal N(9.5,1.6^2)}$ sat, capped at $466{,}359$ sat; sources uniform, destinations drawn with probability proportional to degree$^{1.5}$; $12{,}000$ payments per run on 2026, $9{,}000$ on 2019.
\item Every policy keeps each channel's deployed base fee ($0$ or $1$\,sat on $99\%$ of hub channels).
\item At $K=100$ the deployed fees' refills cost $17.2$k (shared memory) and $2.4$k per run, at $L^{*}$ at most $0.3$k.
\item Capacities: each channel's maximum HTLC size, scaled to $30\%$ (main setting), $15\%$, $50\%$ or $100\%$.
\item Seeds. 2026: tools tuned on calibration seeds $1000$--$1002$, evaluated on $0$--$9$ (some settings $0$--$19$); 2019: $1000$--$1004$ and $0$--$19$. Skew tables fitted on $1000$--$1004$, refitted for the wider-coverage test on those plus $2000$--$2044$ ($600$k payments).
\item A report-only two-part fit $\lambda(L)=\Lambda[(1-w)e^{-(\kappa_c+200)L}+we^{-200L}]$ to the scan gives a captive share $w=0.44$ (2026) and $0.42$ (2019).
\item The calibration scan for $L^{*}$ posts $15$ uniform levels ($25$--$12{,}000$\,ppm) with balances frozen; $L^{*}$ is the vertex of a parabola in $\log L$ through the best grid point and its neighbours. Frontier gaps use per-seed linear interpolation between evaluated levels, never the parabola.
\item Skew tables: a $100$-point balance grid, refill trigger $5\%$ of capacity, refill target $50\%$, $K=100$ (refitted at $K=1{,}000$ for the $K=1{,}000$ runs), solved by relative value iteration; a channel gets a table with at least $20$ outgoing and $10$ incoming calibration forwards ($10$ and $5$ in the wider-coverage refit).
\end{itemize}

\paragraph{Routing models.} 

\begin{itemize}
\item Shared memory: every sender avoids a failed channel, for that amount, during the next $300$ payments.
\item Mission control: each sender records each node pair's last result. A pair's success probability for an amount is $0.95$ up to a recorded success's amount; from a recorded failure's amount up, it recovers from $0$ toward the prior with a half-life of $300$ payments; otherwise it is the node prior, $0.6$ with weight $2$, mixed with the probabilities of the node's recorded pairs. A route minimizes the sum of fees plus an attempt cost of $100$\,sat $+\,1{,}000$\,ppm times $-\ln$ of each hop's probability (LND's defaults).
\item Sensitivity: a half-life of $100$ or $1{,}000$ payments changes every policy's revenue by less than $0.1\%$ ($5$ seeds).
\end{itemize}

\paragraph{Tools.} 

\begin{itemize}
\item \texttt{charge-lnd} posts $\min+(1-x)(\max-\min)$ for local share $x$, with $\min,\max$ in ppm or as multiples of the deployed rate; checked when the share moves by $5\%$, it re-posts on changes of at least $10$\,ppm.
\item feeadjuster multiplies the base rate by $50^{0.5-x}$ (or its soft or hard variants) and re-posts on $5\%$ share moves.
\item LNDg adjusts each channel once per "day" of $300$ payments by the tier of its balance share and its one- and seven-day volume; its failed-HTLC rule is on in the ppm runs (default LNDg is unchanged), and its fee cap, $2{,}500$\,ppm by default, is $10{,}000$\,ppm, or $4L$ for starts $L\ge4{,}000$\,ppm.
\item ppm grids center on $L\in\{600,1{,}200,2{,}450,4{,}900\}$ with ranges $(L/2,2L)$, $(L/4,4L)$, $(0,2L)$ (feeadjuster soft and default); 2019: $L\in[250,12{,}000]$. Multiples on 2026: $24$ configurations up to $1{,}024\times$. Extra centers of $9{,}800$ and $12{,}000$\,ppm were added later (App. B).
\item Defaults (\S{}7.4): \texttt{charge-lnd} from $0.25\times$ to $4\times$ the deployed rate, feeadjuster and LNDg as shipped, all starting from deployed fees.
\end{itemize}

\paragraph{Realism options.} 

\begin{itemize}
\item With every realism option on, removing refills changes revenue by $-0.3\%$ and the skew by $+0.2\%$.
\end{itemize}

\section{Confirmatory Tests, Frozen Predictions and Power}
\label{app:stats}

Two claims, both on the main 2026 setting: (1) no ppm-tuned tool beats the frontier by more than 3\%; (2) the skew adds less than 3\% under mission control and 5\% with shared memory. One-sided paired t-tests over the 10 evaluation seeds, three tools and three skew comparisons per routing model, Holm correction over all 12, family-wise alpha = 0.05. All 12 null hypotheses are rejected; the largest Holm-adjusted p is 0.0086 (\texttt{chan/holm12.json}, all 12 rows with raw and adjusted p; the tool tests use the frontier over every uniform level run, including 2,450, 6,125 and 12,000 ppm, which Table 3's frontier omits; on Table 3's frontier the mission-control raw p change (4.8e-07 -> 7.5e-07, 8.5e-07 -> 2.5e-06, 6.7e-04 -> 7.4e-04), the shared-memory ones do not, and no conclusion does; the 10 tool comparisons against H0 gap $\leq$ 0 are exploratory and not in the family).

\begin{table}[H]
\centering\scriptsize\setlength{\tabcolsep}{2.5pt}
\resizebox{\ifdim\width>\columnwidth\columnwidth\else\width\fi}{!}{%
\begin{tabular}{@{}llllll@{}}
\toprule
model & comparison & H0 & mean & p & p (Holm, 12) \\
\midrule
shared & charge-lnd (ppm) & gap $\geq$ +3\% & -4.17\% & 0.0014 & 0.0065 \\
shared & feeadjuster (ppm) & gap $\geq$ +3\% & -3.96\% & 0.0086 & 0.0086 \\
shared & LNDg (ppm) & gap $\geq$ +3\% & -3.17\% & 0.0016 & 0.0065 \\
shared & skew at deployed fees & gain $\geq$ 5\% & -1.79\% & 0.0013 & 0.0065 \\
shared & skew on 1,000 ppm & gain $\geq$ 5\% & -0.74\% & 3.4e-05 & 0.00024 \\
shared & skew on $L^{*}$ & gain $\geq$ 5\% & +0.04\% & 0.002 & 0.0065 \\
mc & charge-lnd (ppm) & gap $\geq$ +3\% & -10.55\% & 4.8e-07 & 4.3e-06 \\
mc & feeadjuster (ppm) & gap $\geq$ +3\% & -7.57\% & 8.5e-07 & 6.8e-06 \\
mc & LNDg (ppm) & gap $\geq$ +3\% & -1.50\% & 0.00067 & 0.004 \\
mc & skew at deployed fees & gain $\geq$ 3\% & +0.19\% & 1.8e-09 & 1.8e-08 \\
mc & skew on 1,000 ppm & gain $\geq$ 3\% & -0.04\% & 1.2e-13 & 1.3e-12 \\
mc & skew on $L^{*}$ & gain $\geq$ 3\% & +0.01\% & 1e-20 & 1.2e-19 \\
\bottomrule
\end{tabular}
}
\end{table}

Not part of the family: no tool, including those tuned as multiples of deployed fees, rejects H0 gap $\leq$ 0 (p $\geq$ 0.92 raw), so none lies significantly above the frontier.

\paragraph{Frozen hub-count predictions.} Frozen before the runs, its predictions for $10$, $20$ and $80$ hubs held: under shared memory each peaks at $4{,}900$\,ppm, the grid level nearest the prediction, and stays within $0.07$ of the predicted curve; under mission control the predicted level earns at least $94\%$ of the best (\texttt{predictions\_hubk.json}).

\paragraph{Tools against their center.} Against the flat level at its center, paired by seed, LNDg earns $-3.1\pm5.0\%$ and $+0.2\pm2.3\%$, \texttt{charge-lnd} $-5.2\pm4.6\%$ and $-2.2\pm3.3\%$, feeadjuster $-20.4\pm4.8\%$ and $-8.8\pm1.8\%$, for at most $4.2$ points more success.

One-sided paired t-tests over the 10 evaluation seeds (\texttt{chan/power.json}, \texttt{analysis/power\_confirmatory.py}). Power is for a true effect of 0; the last column is the smallest margin rejected with 80\% power at the strictest Holm level (alpha/12).

\begin{table}[H]
\centering\scriptsize\setlength{\tabcolsep}{2.5pt}
\resizebox{\ifdim\width>\columnwidth\columnwidth\else\width\fi}{!}{%
\begin{tabular}{@{}lllllll@{}}
\toprule
test & mean (\%) & se & margin & power ($\alpha$/12) & power ($\alpha$) & margin at 80\% \\
\midrule
shared/clndabs & -4.17 & 1.77 & 3 & 0.11 & 0.47 & 7.7 \\
shared/feeadjabs & -3.96 & 2.39 & 3 & 0.05 & 0.31 & 10.4 \\
shared/lndgabs & -3.17 & 1.56 & 3 & 0.14 & 0.55 & 6.8 \\
shared/skew@deployed & -1.79 & 1.65 & 5 & 0.43 & 0.88 & 7.1 \\
shared/skew@1000ppm & -0.74 & 0.83 & 5 & 0.98 & 1.00 & 3.6 \\
shared/skew@$L^{*}$ & +0.04 & 1.29 & 5 & 0.68 & 0.97 & 5.6 \\
mc/clndabs & -10.72 & 1.17 & 3 & 0.29 & 0.76 & 5.1 \\
mc/feeadjabs & -8.04 & 1.01 & 3 & 0.41 & 0.86 & 4.4 \\
mc/lndgabs & -1.59 & 1.00 & 3 & 0.42 & 0.87 & 4.3 \\
mc/skew@deployed & +0.19 & 0.13 & 3 & 1.00 & 1.00 & 0.5 \\
mc/skew@1000ppm & -0.04 & 0.05 & 3 & 1.00 & 1.00 & 0.2 \\
mc/skew@$L^{*}$ & +0.01 & 0.01 & 3 & 1.00 & 1.00 & 0.0 \\
\bottomrule
\end{tabular}
}
\end{table}

Probe size: forwards a hub needs per level to tell two levels 4\% apart (two-sided 5\%, 80\% power), from the seed-to-seed variation of its revenue at a fixed level.

\begin{table}[H]
\centering\scriptsize\setlength{\tabcolsep}{2.5pt}
\resizebox{\ifdim\width>\columnwidth\columnwidth\else\width\fi}{!}{%
\begin{tabular}{@{}llllll@{}}
\toprule
model/level & hubs & CV & fwd/run & needed (median) & interquartile \\
\midrule
shared/uniform1000 & 39 & 0.38 & 21.8 & 32,186 & 16,597--41,067 \\
shared/uniform4900 & 37 & 0.62 & 11.0 & 35,473 & 26,056--44,234 \\
mc/uniform1000 & 40 & 0.24 & 53.8 & 32,934 & 24,660--42,042 \\
mc/uniform4900 & 39 & 0.51 & 19.1 & 39,672 & 25,034--56,721 \\
\bottomrule
\end{tabular}
}
\end{table}

\section{Ordering Across 2026 Settings}
\label{app:settings}

\begin{table}[H]
\centering
\caption{The ordering across 2026 settings, senders blind to balances, each setting
with its own calibration ($10$ seeds): revenue
peak, hub gross revenue and refills at $L^{*}$ relative to deployed fees, and
the refill change from the skew at deployed fees. Refills are forced except with economic refills and all realism options ($4.4$, $6.6$ and $5.0$ per run at deployed fees, $L^{*}$ and with the skew).
$^{\dagger}$Best level on the evaluation seeds, so the multiple is an upper bound ($\kappa{=}100$ and capacities $15$, $50\%$: $5$ seeds). --: no refills at deployed fees, or skew not run.}
\label{tab:robust}
\Description{Table of the revenue peak, revenue multiple and refill change at the peak, and the skew's refill change, for each robustness setting on the 2026 snapshot; the 2019 settings are summarized in the appendix.}
\scriptsize
\setlength{\tabcolsep}{2.4pt}\renewcommand{\arraystretch}{0.86}
\begin{tabular}{@{}lrrr>{\color{trimblue!80!black}}r@{}}
\toprule
Setting & $L^{*}$ & Rev. & Refills & \textbf{Skew} \\
\midrule
2026, shared memory & $4{,}900$ & $17.1\times$ & $-98\%$ & $-21\%$ \\
2026, mission control & $4{,}900$ & $3.4\times$ & $-91\%$ & $-5\%$ \\
2026, economic refills, shared & $4{,}900$ & $13.1\times$ & -- & -- \\
2026, economic refills, MC & $4{,}900$ & $3.4\times$ & -- & -- \\
2026, all realism options & $4{,}900$ & $3.2\times$ & $+50\%$ & $+14\%$ \\
2026, MC, payments $5\times$ & $4{,}000^{\dagger}$ & $4.3\times$ & $-92\%$ & $-7\%$ \\
2026, shared, payments $5\times$ & $4{,}900^{\dagger}$ & $20.9\times$ & $-98\%$ & $-23\%$ \\
2026, MC, capacities $15$; $50\%$ & $4{,}000^{\dagger}$ & $3.3$; $3.4\times$ & $-88$; $-94\%$ & -- \\
2026, shared, capacities $15$; $50\%$ & $4{,}900^{\dagger}$ & $17.6$; $18.8\times$ & $-99$; $-100\%$ & -- \\
2026, MC, $\kappa{=}100$ & $8{,}000^{\dagger}$ & $5.7\times$ & $-85\%$ & -- \\
2026, MC, $\kappa{=}400$ & $1{,}500^{\dagger}$ & $2.5\times$ & $-80\%$ & -- \\
2026, MC, $\kappa{=}1000$ & $600^{\dagger}$ & $2.3\times$ & $-65\%$ & -- \\
2026, MC, $\kappa{=}2000$ & $500^{\dagger}$ & $2.2\times$ & $-76\%$ & -- \\
2026, shared, $\kappa{=}2000$ & $500^{\dagger}$ & $4.3\times$ & $-99\%$ & -- \\
2026, MC, heavy-tailed WTP & $3{,}000^{\dagger}$ & $2.7\times$ & $-90\%$ & -- \\
2026, shared, heavy-tailed WTP & $5{,}000^{\dagger}$ & $13.6\times$ & $-98\%$ & -- \\
\bottomrule
\end{tabular}
\end{table}

\paragraph{Ordering across settings.} With LND's route cost, mission control ranks routes by fee plus the attempt cost divided by the route's success probability, as LND does. With wallet fee limits, senders accept a route only within their wallet's default fee limit (CLN $0.5\%$, LND $5\%$; half the senders each), and the limits place the peak. With all realism options on, refills are made only when they pay (App.~\ref{app:x26}). At full capacities ($10$ seeds) LNDg earns $2.1\pm2.9\%$ and $2.8\pm2.0\%$ more than the frontier at its success; the other tools are within the noise of it or below. With every realism option on, where refills are made only when they pay, the skew raises refills slightly ($4.4$ to $5.0$ per run at deployed fees), since higher fees make refills pay. The best tools lie $2$--$11\%$ below the frontier with $\kappa=2{,}000$, where a uniform level still earns $2.2$ and $4.3\times$ the deployed fees, $5$--$6\%$ with a heavier-tailed willingness to pay, $6$--$7\%$ with payments five times larger, and $6$--$8\%$ at capacity proxies of $15\%$, within $3\%$ at $50\%$. At their defaults the tools keep the deployed level, earning $-42$ to $+8\%$ of its revenue, $40$--$49\%$ below the frontier under mission control.

\paragraph{Wallet-count sweep.} Mission control kept per wallet, senders hashed to wallets; seeds 0-4 (20 wallets: 0-9).

\begin{table}[H]
\centering\scriptsize\setlength{\tabcolsep}{2.5pt}
\resizebox{\ifdim\width>\columnwidth\columnwidth\else\width\fi}{!}{%
\begin{tabular}{@{}lll@{}}
\toprule
wallets & best level & multiple \\
\midrule
1 & 4000 ppm & 3.54 \\
5 & 4000 ppm & 3.34 \\
20 & 4000 ppm & 3.18 \\
100 & 4000 ppm & 3.15 \\
\bottomrule
\end{tabular}
}
\end{table}

\end{document}